\newcommand{\ketbra}[2]{| \hspace{1pt} #1 \rangle \langle #2 \hspace{1pt} |}
\newcommand{\dya}[1]{\ket{#1}\!\bra{#1}}
\newcommand{\ip}[2]{\langle #1|#2\rangle}      
\newcommand{\EC}{\mathcal{E}}
\newcommand{\OC}{\mathcal{O}}
\newcommand{\Tr}{{\rm Tr}}
\renewcommand{\geq}{\geqslant}
\renewcommand{\leq}{\leqslant}
\newcommand{\mte}[2]{\langle#1|#2|#1\rangle }
\renewcommand{\Re}{\text{Re}}
\renewcommand{\vec}[1]{\boldsymbol{#1}}  
\newcommand{\ad}{^\dagger}
\newcommand*{\id}{\openone}
\newcommand{\thv}{\vec{\theta}}
\newcommand{\gav}{\vec{\gamma}}
\newcommand{\alv}{\vec{\alpha}}
\newtheoremstyle{example}{\topsep}{\topsep}%
{}
{}
{\bfseries}
{:}
{   }
{\thmname{#1}\thmnumber{ #2}}
\theoremstyle{example}
\newtheorem{lemma}{Lemma}
\theoremstyle{definition}
\begin{document}

\title{Variational Fast Forwarding for Quantum Simulation Beyond the Coherence Time}

\author{Cristina C\^{i}rstoiu}
\thanks{The first two authors contributed equally to this work.}
\affiliation{Theoretical Division, Los Alamos National Laboratory, Los Alamos, NM, USA.}
\affiliation{Department of Computer Science, Oxford University, Oxford, UK.}

\author{Zo\"e Holmes}
\thanks{The first two authors contributed equally to this work.}
\affiliation{Theoretical Division, Los Alamos National Laboratory, Los Alamos, NM, USA.}
\affiliation{Information Sciences, Los Alamos National Laboratory, Los Alamos, NM, USA.}
\affiliation{Department of Physics, Imperial College, London, UK.}

\author{Joseph Iosue}
\affiliation{Theoretical Division, Los Alamos National Laboratory, Los Alamos, NM, USA.}
\affiliation{QC Ware Corporation, Palo Alto, CA, USA.}

\author{Lukasz Cincio}
\affiliation{Theoretical Division, Los Alamos National Laboratory, Los Alamos, NM, USA.}

\author{Patrick J. Coles}
\thanks{pcoles@lanl.gov}
\affiliation{Theoretical Division, Los Alamos National Laboratory, Los Alamos, NM, USA.}

\author{Andrew Sornborger} 
\thanks{sornborg@lanl.gov}
\affiliation{Information Sciences, Los Alamos National Laboratory, Los Alamos, NM, USA.}

\begin{abstract}
Trotterization-based, iterative approaches to quantum simulation are restricted to simulation times less than the coherence time of the quantum computer, which limits their utility in the near term. Here, we present a hybrid quantum-classical algorithm, called Variational Fast Forwarding (VFF), for decreasing the quantum circuit depth of quantum simulations. VFF seeks an approximate diagonalization of a short-time simulation to enable longer-time simulations using a constant number of gates. Our error analysis provides two results: (1) the simulation error of VFF scales at worst linearly in the fast-forwarded simulation time, and (2) our cost function's operational meaning as an upper bound on average-case simulation error provides a natural termination condition for VFF. We implement VFF for the Hubbard, Ising, and Heisenberg models on a simulator.  Additionally, we implement VFF on Rigetti's quantum computer to demonstrate simulation beyond the coherence time. Finally, we show how to estimate energy eigenvalues using VFF.
\end{abstract}
\maketitle

\section{Introduction}\label{sc:intro}

Quantum simulation (QS) was the earliest proposed example of a quantum algorithm that could outcompete the best classical algorithm \cite{feynman1982simulating}. Accelerated QS would impact fields including chemistry, materials science, and nuclear and high-energy physics. Current approaches include quantum emulation (or analogue QS)~\cite{JakschSim98, BlochSim12, MonroeSim2011, GormanSim18, kokail2019self}, Suzuki-Trotter-based methods~\cite{LloydQuantumSim, LloydFermiSim,SornborgerStewart99,campbell2019random}, and Taylor expansion-based QSs using linear combinations of unitaries~\cite{ChildsLCU,Somma2015, BabbushLCU}.  Quantum emulation and Suzuki-Trotter-based QSs have seen proof-of-principle demonstrations \cite{JakschSim98,BlochSim12,MonroeSim2011,GormanSim18, kokail2019self,FengEtAl2013}, while Taylor expansion-based QSs have the best asymptotic scaling and will likely have application for fault-tolerant Quantum Computers (QCs) of the future.

In the current noisy intermediate-scale quantum (NISQ) era, variational quantum simulation (VQS) methods are expected to be important. Variational algorithms have been introduced for finding ground and excited states~\cite{peruzzo2014VQE,McCleanPerturb, nakanishi2018subspace,Higgott2019} and for other applications~\cite{larose2018, arrasmith2019variational,Romero17, anschuetz2019variational}. In addition, some variational algorithms simulate system dynamics~\cite{li2017efficient, BenjaminVarSim2, BenjaminVarSim1, HeyaEtAl2019}. Of the variational dynamical simulation methods, some are based on knowledge of low-lying excited states~\cite{HeyaEtAl2019}, and some are iterative in time~\cite{BenjaminVarSim2, BenjaminVarSim1, li2017efficient}. Both approaches have the potential to outperform Suzuki-Trotter-based methods in the NISQ era.

Simulating the dynamics of a quantum system for time $T$ typically requires $\Omega(T)$ gates so that a generic Hamiltonian evolution cannot be achieved in sublinear time. This result is known as the `No Fast Forwarding Theorem' and holds both for a typical unknown Hamiltonian \cite{atia2017fast} and for the query model setting \cite{BerryEtAl2007}. However, there are particular Hamiltonians that can be fast forwarded, which means that the quantum circuit depth does not need to grow significantly with simulation time. Hamiltonians that allow fast-forwarding are precisely those that lead to violations of time-energy uncertainty relations and equivalently allow for precise energy measurements~\cite{atia2017fast}. For example, commuting local Hamiltonians \cite{atia2017fast}, quadratic fermionic Hamiltonians \cite{atia2017fast}, and continuous-time quantum walks on particular graphs \cite{loke2017efficient} can all be fast forwarded. In addition, Ref.~\cite{PhysRevA.79.032316} exploited the exact solvability of the transverse Ising model to formulate a quantum circuit for its exact diagonalization, allowing for fast forwarding. This circuit was used to simulate the Ising model on Cloud QCs~\cite{CerveraLierta2018}. A subspace-search variational eigensolver was employed in \cite{HeyaEtAl2019} to fast forward low-lying states in a quantum system. In \cite{novo2019quantum} a Hamiltonian whose diagonalization is constructed out of IQP circuits is shown to give a quantum advantage for the task of energy sampling. More generally, it remains an open problem to determine the precise form for Hamiltonians that do and do not allow fast forwarding. 

Previous results analyze fast forwarding of Hamiltonians mostly in a computational complexity setting \cite{atia2017fast,BerryEtAl2007,Childs:2010:LSN:2011373.2011380} in which the asymptotic scaling of the runtime of quantum circuits implementing a large scale simulation is important. However, near-term devices are constrained to simulating intermediate scale systems using finite depth circuits. 
The behavior of an algorithm to simulate large systems and long times may not be indicative of its behavior in smaller scale regimes. 
Therefore, as discussed further in \color{black}{\ref{ap:FastForwarding}}, whether or not asymptotic fast-forwarding is possible for a particular Hamiltonian has limited impact on the simulations that may be performed using near-term QCs. 

The advantage of fast forwarding, if possible, for near-term QCs is that the simulation time $T$ can be much longer than the coherence time $\tau$ of the QC performing the simulation. This is because $T$ is just a parameter that is set `by hand' in a fixed-depth quantum circuit \cite{PhysRevA.79.032316,HeyaEtAl2019}. 
Therefore we ask the following core question: \emph{Can we fast forward the evolution of a Hamiltonian beyond the coherence time of a near-term device using a variational algorithm?}

\begin{figure}[t]
\includegraphics[width=\columnwidth]{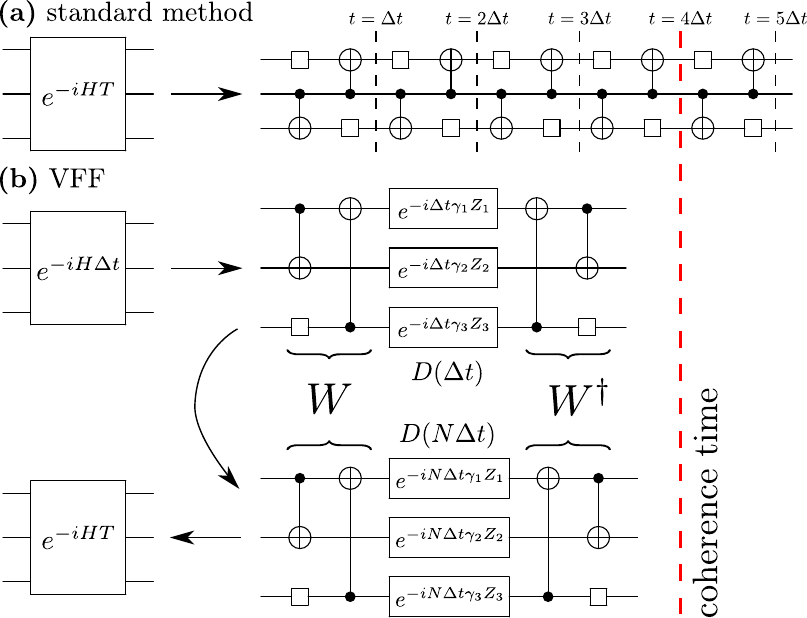}
\caption[]{The concept of Variational Fast Forwarding (VFF). (a) A Trotterization-based quantum simulation with $N=5$ timesteps. This simulation runs past the coherence limit of the quantum architecture. (b) A VFF-based quantum simulation. An approximate diagonalization of a short-time simulation is found variationally. Using the eigenvector $W$ and diagonal $D$ unitaries that were learned, an arbitrary length simulation is implemented by modifying the parameters in $D$. As long as VFF results in few enough gates that the circuit does not exceed the coherence time, longer simulations can be performed than the standard method in (a).}
\label{fig:Decomp}
\end{figure}
In this paper, we introduce a variational, hybrid quantum-classical algorithm that we call Variational Fast Forwarding (VFF). We envision it to be most useful for implementing quantum simulations on near-term, NISQ computers. However, it could also have uses in fault-tolerant QS. It is distinct from SVQS \cite{HeyaEtAl2019} in that our method searches for an approximate diagonalization of an entire QS unitary, rather than for a finite set of low-lying states. Most importantly, we analyze the simulation errors produced by VFF and guarantee a desired accuracy for the simulation once a termination condition is achieved. This is possible due to the operational meaning of our cost function. In contrast, low-energy subspace approaches as in SVQS may not be able to guarantee a desired simulation error, since the cost function (i.e., the energy) does not carry an obvious operational meaning.

The basic idea of VFF is depicted in Fig.~\ref{fig:Decomp}. Section~\ref{sec:Results} presents our main results including an overview of the algorithm, the cost function, error analysis, and implementations of VFF on a simulator and on Rigetti's QC. Section~\ref{sec:Discussion} discusses these results, and Section~\ref{sec:Methods} elaborates on our ansatz and training methods. 

\section{Results}\label{sec:Results}

\subsection{The VFF Algorithm}

\subsubsection{Overview}

\begin{figure*}
\begin{center}
\includegraphics[width=1.8\columnwidth]{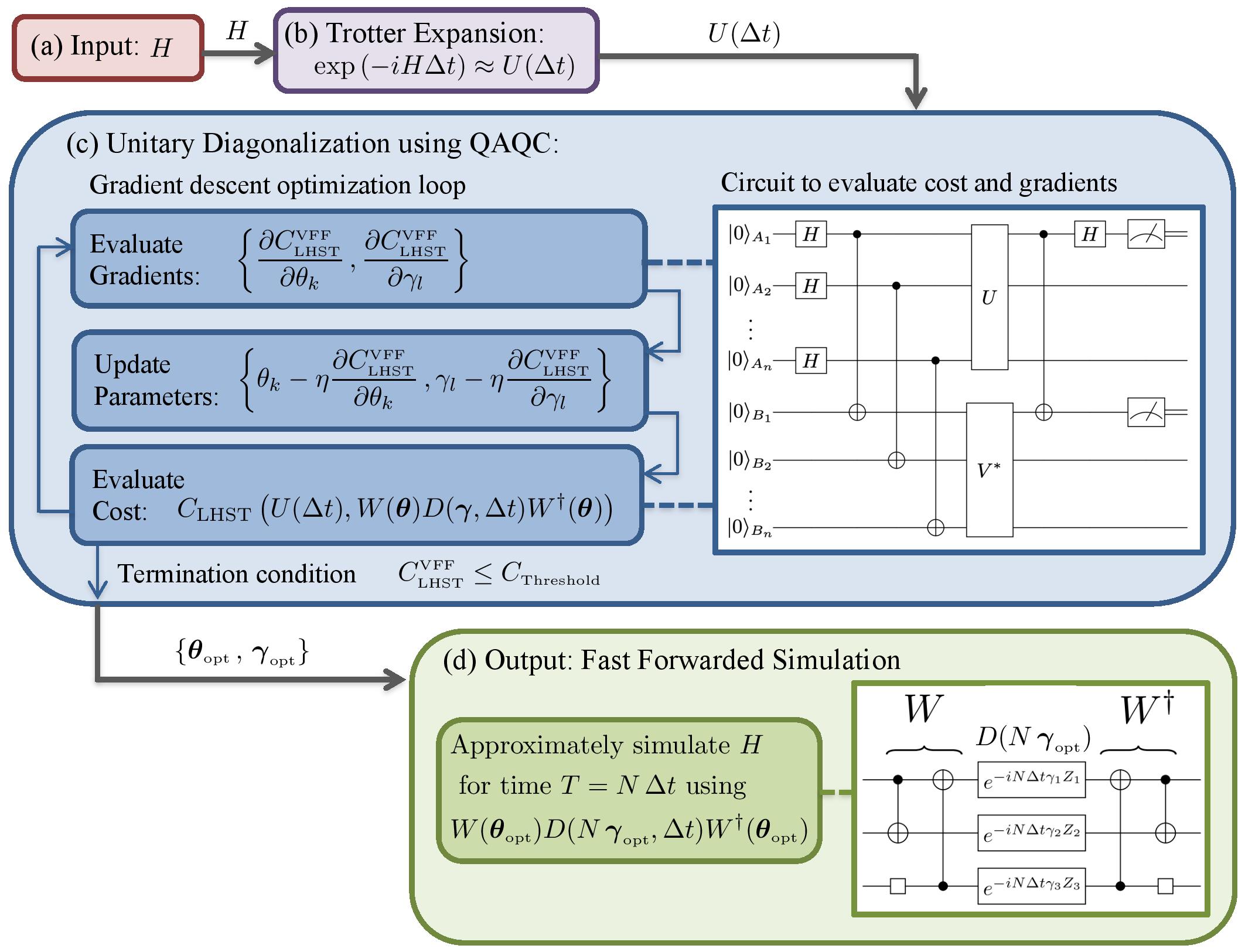}
\makeatletter
\renewcommand{\@makecaption}[2]{%
  \par\vskip\abovecaptionskip\begingroup\small\rmfamily
  \splittopskip=0pt
  \setbox\@tempboxa=\vbox{
    \@arrayparboxrestore \let \\\@normalcr
    \hsize=.5\hsize \advance\hsize-1em
    \let\\\heading@cr
    \@make@capt@title {#1}{#2}
  }%
  \vbadness=10000
  \setbox\z@=\vsplit\@tempboxa to .55\ht\@tempboxa
  \setbox\z@=\vtop{\hrule height 0pt \unvbox\z@}
  \setbox\tw@=\vtop{\hrule height 0pt \unvbox\@tempboxa}
  \noindent\box\z@\hfill\box\tw@\par
  \endgroup\vskip \belowcaptionskip
}
\makeatother
\end{center}
\caption{The VFF Algorithm. (a) An input Hamiltonian is transformed into (b) a gate sequence associated with a single-timestep Trotterized unitary, $U(\Delta t)$. (c) The unitary is then variationally diagonalized by fitting a parameterized factorization, $V(\alv, \Delta t) = W(\thv) D(\gav , \Delta t) W^\dagger(\thv)$. This variational subroutine employs gradient descent to minimize a cost function $C_{\mbox {\tiny LHST}}$, whose gradient is efficiently estimated with a short-depth quantum circuit called the Local Hilbert-Schmidt Test (LHST). The variational loop is exited when a termination condition given by \eqref{eqn:termination} is reached, which guarantees that a user-defined bound on the average fidelity $\overline{F}(T)$ is achieved. (d) After the termination condition is reached, the optimal parameters $(\thv_{\text{opt}}, \gav_{\text{opt}})$ are used to implement a fast-forwarded simulation, with the fast-forwarding error growing sub-linearly in the simulation time (see Eq.~\eqref{eqnTriangleInequality2}). The fast-forwarding is performed by modifying the parameters of the diagonal unitary, $D(\gav_{\text{opt}} , \Delta t) \rightarrow D(N \gav_{\text{opt}} , \Delta t)$, producing a quantum simulation unitary, $W(\thv_{\text{opt}}) D(N \gav_{\text{opt}} , \Delta t) W^\dagger(\thv_{\text{opt}})$. 
}
\label{fig:VFF-algorithm}
\end{figure*}

Given a Hamiltonian $H$ on a $d=2^n$ dimensional Hilbert space (i.e., on $n$ qubits) evolved for a short time $\Delta t$ with the simulation unitary $e^{-iH\Delta t}$, the goal is to find an approximation that allows the simulation at later times $T$ to be fast forwarded beyond the coherence time.
Figure~\ref{fig:VFF-algorithm} schematically shows the VFF algorithm, which consists of the following steps:
\begin{enumerate}
	\item Implement a unitary circuit $U(\Delta t)$ to approximate $e^{-iH\Delta t}$, the simulation at a small time step. 
	\item Compile $U(\Delta t)$ to a diagonal factorization $V = WDW^\dagger \approx e^{-iH\Delta t}$ with circuit depth $L$.
	\item Approximately fast forward the quantum simulation at large time $T = N \Delta t$ using the same circuit of depth $L$: $e^{-iH T} \approx W D^N W^\dagger$.
\end{enumerate} 

Typically $U(\Delta t)$ will be a single-timestep Trotterized unitary approximating $e^{-iH\Delta t}$. We variationally search for an approximate diagonalization of $U(\Delta t)$ by compiling it to a unitary with a structure of the form 
\begin{align}
\label{eqn:Vansatz}
    V(\alv , \Delta t) := W(\thv) D(\gav , \Delta t) W(\thv)^\dagger\,,
\end{align}
with $\alv = (\thv , \gav)$ being a vector of parameters. Here, $D(\gav , \Delta t)$ is a parameterized unitary composed of commuting unitaries that encode the eigenvalues of $U(\Delta t)$ while $W(\thv)$ is a parameterized unitary matrix consisting of corresponding eigenvectors. In Sec.~\ref{sec:Methods}, we describe layered structures that provide ans\"atze for the circuits $W(\thv)$ and $D(\gav , \Delta t)$, and we detail our gradient-descent optimization methods for training $\thv$ and $\gav$. 

To approximately diagonalize $U(\Delta t)$, the parameters $\alv = (\thv , \gav)$ are variationally optimized to minimize a cost function $C_{\mbox {\tiny LHST}}(U(\Delta t), V)$ that can be evaluated using a short-depth quantum circuit called the Local Hilbert-Schmidt Test (LHST)~\cite{khatrietal2019} shown in Fig.~\ref{fig:VFF-algorithm}(c). The compilation procedure we employ to approximate $U(\Delta t)$ by $V(\alv , \Delta t)$ makes use of the Quantum-Assisted Quantum Compiling (QAQC) algorithm \cite{khatrietal2019}, that was later shown to be robust to quantum hardware noise \cite{SharmaEtAl2019}. Section~\ref{sec:cost} below elaborates on our cost function.

If we can find such an approximate diagonalization for $U(\Delta t)$ then, for any total simulation time, $T = N\Delta t$, we have: 
\begin{align}
e^{-i H T} & =  (e^{-i H \Delta t})^N\\
\label{eqnApprox1}& \approx  (U(\Delta t))^N\\
\label{eqnApprox2}& \approx  W(\thv) D(\gav , \Delta t)^N W(\thv)^\dagger\\
& =   W(\thv) D(N \gav, \Delta t) W(\thv)^\dagger \,. \label{VFF}
\end{align}
Hence, a QS for any total time, $T$, may be performed with a fixed quantum circuit structure as depicted in Fig.~\ref{fig:VFF-algorithm}(d). In Sec.~\ref{sec:ErrorAnalysis}, we perform an error analysis to investigate how the approximate equalities in \eqref{eqnApprox1} and \eqref{eqnApprox2} affect the overall simulation error.

\subsubsection{Cost Function and Cost Evaluation}\label{sec:cost}

For the variational compiling step of VFF (shown in Fig.~\ref{fig:VFF-algorithm}(c)), we employ the cost function $C_{\mbox {\tiny LHST}}(U, V)$ introduced in Ref.~\cite{khatrietal2019}. This is defined as
\begin{equation}\label{eq-LHST_1}
    C_{\mbox {\tiny LHST}}(U, V)= 1- \frac{1}{n}\sum_{j=1}^n F_e^{(j)},
\end{equation}
where the $F_e^{(j)}$ are entanglement fidelities and hence satisfy $0\leq F_e^{(j)} \leq 1$. Specifically, $F_e^{(j)}$ is the entanglement fidelity for the quantum channel obtained from feeding into the unitary $U V\ad$ the maximally mixed state on $\overline{j}$ and then tracing over $\overline{j}$ at the output of $U V\ad$, where $\overline{j}$ contains all qubits except for the $j$-qubit. We elaborate on the form of $C_{\mbox {\tiny LHST}}(U, V)$ in \ref{ap:Cost}.

This function has several important properties. 
\begin{enumerate}
    \item It is faithful, vanishing if and only if $V = U$ (up to a global phase).
    \item Non-zero values are operationally meaningful. Namely, $C_{\mbox {\tiny LHST}}(U, V)$ upper bounds the average-case compilation error as follows:
\begin{equation}
    C_{\mbox {\tiny LHST}}(U, V) \geq  \frac{d+1}{nd}(1 - \overline{F}(U,V))\,,
\end{equation}
where $\overline{F}(U,V)$ is the average fidelity of states acted upon by $V$ versus those acted upon by $U$, with the average being over all Haar-measure pure states.
    \item The cost function appears to be trainable, in the sense that it does not have an obvious barren plateau issue (i.e., exponentially vanishing gradient, see Refs.~\cite{khatrietal2019, BarrenPlateausCerezo}).
    \item Estimating the cost function is DQC1-hard and hence it cannot be efficiently estimated with a classical algorithm~\cite{khatrietal2019}.
    \item There exists a short-depth quantum circuit for efficiently estimating the cost and its gradient.
\end{enumerate}
\noindent Regarding the last point, each $F_e^{(j)}$ term in \eqref{eq-LHST_1} is estimated with a different quantum circuit and then one classically sums them up to compute $C_{\mbox {\tiny LHST}}(U,V)$. An example of such a circuit is depicted in Fig.~\ref{fig:VFF-algorithm}(c). It involves $2n$ qubits, with the top (bottom) $n$ qubits denoted $A$ ($B$). The probability of the 00 measurement outcome on qubits $A_jB_j$ in this circuit is precisely the entanglement fidelity $F_e^{(j)}$. Therefore $2n$ single qubit measurements are required to compute $C_{\mbox {\tiny LHST}}(U,V)$, a favourable scaling compared to, for example, the $O(n^4)$ measurements that are naively required for VQE~\cite{peruzzo2014VQE, VQEMeasurementsVerteletskyi,VQEMeasurementsGokhale, VQEMeasurementsCrawford}. We also remark that $C_{\mbox {\tiny LHST}}(U,V)$ was recently shown to have noise resilience properties, in that noise acting during the circuit in Fig.~\ref{fig:VFF-algorithm}(c) tends not to affect the global optimum of this function~\cite{SharmaEtAl2019}.

For simplicity, we will often write our cost function as
\begin{equation}
\label{CLHSTVFFdef}
    C^{\mbox{\tiny VFF}}_{\mbox{\tiny LHST}}({T}) := C_{\mbox{\tiny LHST}}(U^{\frac{{T}}{ \Delta t}}, V^{\frac{{T}}{ \Delta t}})
\end{equation}
with $U = U(\Delta t)$ and $V = V(\alv, \Delta t)$, and note that $C^{\mbox{\tiny VFF}}_{\mbox{\tiny LHST}}(\Delta t)$ is the quantity that we directly minimize in the optimization loop of VFF.

\subsubsection{Simulation Error Analysis}\label{sec:ErrorAnalysis}

\textit{Linear scaling in $N$.---}In practice, each of the steps in the VFF algorithm above will generate errors. This includes the algorithmic error from the approximate implementation, $U(\Delta t)$, of the infinitesimal time evolution operator $e^{-iH\Delta t}$ and error from the approximate compilation and diagonalization of $U(\Delta t)$ into $V(\alv , \Delta t)$. These two error sources bound the overall error via the triangle inequality:
\begin{align}
\label{eqnTriangleInequality1}
    \epsilon^{\mbox{\tiny FF}}_p(\Delta t) \leq \epsilon^{\mbox{\tiny TS}}_p(\Delta t) + \epsilon^{\mbox{\tiny ML}}_p(\Delta t)\,.
\end{align}
Here, $\epsilon^{\mbox{\tiny FF}}_p(\Delta t)$ is the overall simulation error for time $\Delta t$, $\epsilon^{\mbox{\tiny TS}}_p(\Delta t)$ is the Trotterization error (note that this error may always be reduced using higher-order Trotterizations at the cost of more gates), and $\epsilon^{\mbox{\tiny ML}}_p(\Delta t)$ is the ``machine learning'' error associated with the variational compilation step. These quantities are defined as
\begin{align}
\epsilon^{\mbox{\tiny FF}}_p(\Delta t) & =  \| e^{-iH \Delta t} - V(\alv , \Delta t) \|_{p} \\
\epsilon^{\mbox{\tiny TS}}_p(\Delta t)  & =  \| e^{-iH \Delta t} - U(\Delta t) \|_p \\
\epsilon^{\mbox{\tiny MS}}_p(\Delta t)  & =  \| U(\Delta t) - V(\alv , \Delta t) \|_p \,,
\end{align}
where $\|M\|_p = (\sum_j m_j^p)^{1/p}$ is the Schatten $p$-norm, with $\{m_j\}$ the singular values of $M$. 

Ultimately we are interested in fast-forwarding and hence we want to bound $\epsilon^{\mbox{\tiny FF}}_p(T)$ with $T = N\Delta t$. For this purpose, we prove a lemma in \ref{sec:SimulationErrors} stating that
\begin{equation}
\label{eqnLemmaNsteps}
\| U_1^N - U_2^N \|_p \leq N \| U_1 - U_2 \|_p \,,
\end{equation}
for any two unitaries $U_1$ and $U_2$. Combining this lemma with the triangle inequality in \eqref{eqnTriangleInequality1} gives
\begin{align}
\label{eqnTriangleInequality2}
    \epsilon^{\mbox{\tiny FF}}_p(T) \leq N (\epsilon^{\mbox{\tiny TS}}_p(\Delta t) + \epsilon^{\mbox{\tiny ML}}_p(\Delta t))\,.
\end{align}
Equation~\eqref{eqnTriangleInequality2} implies that the overall simulation error scales at worst linearly with the number of time steps, $N$.

We remark that, for the special case of $p=2$, Eq.~\eqref{eqnLemmaNsteps} can be reformulated in terms of our cost function as:
\begin{equation}
\label{eqn:subquadratic}
    	C^{\mbox{\tiny VFF}}_{\mbox{\tiny LHST}}(T) \lessapprox  n \, N^2 \, C^{\mbox{\tiny VFF}}_{\mbox{\tiny LHST}}(\Delta t) \,, 
\end{equation}
with $C^{\mbox{\tiny VFF}}_{\mbox{\tiny LHST}}$ given by \eqref{CLHSTVFFdef}. The approximation in \eqref{eqn:subquadratic} holds when the cost function $C^{\mbox{\tiny VFF}}_{\mbox{\tiny LHST}}(\Delta t )$ is small, which is the case after a successful optimization procedure. See \ref{sec:SimulationErrors} for the non-approximate version of \eqref{eqn:subquadratic}. Thus we find that the VFF cost function scales at worst quadratically in $N$ under fast forwarding. 

\bigskip

\textit{Certifiable error and a termination condition.---}Equation~\eqref{eqnTriangleInequality2} holds for all Schatten norms, but of particular interest for our purposes is the Hilbert-Schmidt norm, $p =2$, from which we can derive \textit{certifiable} error bounds on the \textit{average-case} error. In addition, the operator norm, $p = \infty$, quantifies the \textit{worst-case} error and is often used in the quantum simulation literature~\cite{Wecker2014, Poulin2015}. For our numerical implementations (Section~\ref{sec:Implementations}), we will consider both worst-case and average-case error. On the other hand, for our analytical results presented here, we will focus on average-case error since it is naturally suited to providing a termination condition for the optimization in VFF.

As an operationally-meaningful measure of average-case error we consider the average gate fidelity between the target unitary $e^{-iHT}$ and the approximate simulation $V(\alv, T)$ arising from the VFF algorithm:
\begin{equation}
\label{eqn:AverageFidelity}
\overline{F}(T) =  \int_\psi | \mte{\psi}{V(\alv, T)\ad e^{-iHT}} |^2 d\psi, 
\end{equation}
where the integral is over all states $\ket{\psi}$ chosen according to the Haar measure.

In \ref{sec:SimulationErrors} we show that one can lower bound $\overline{F}(T)$ based on the value of the VFF cost function,
\begin{align}
\label{eqn:approxLowerBoundF}
\overline{F}(T) \gtrapprox 1 - \frac{d}{d+1} N^2 \left(\epsilon_{\infty}^{\mbox{\tiny TS}}(\Delta t)  + \sqrt{ n C_{\mbox{\tiny LHST}}^{\mbox{\tiny VFF}}(\Delta t)  } \right)^2   \,.
\end{align}
This inequality holds to a good approximation in the limit that $C_{\mbox{\tiny LHST}}^{\mbox{\tiny VFF}}(\Delta t)$ is small, as is the case after a successful optimization procedure. See \ref{sec:SimulationErrors} for the exact lower bound on $\overline{F}(T)$, from which \eqref{eqn:approxLowerBoundF} is derived.

\begin{figure}[!t]
\centering
\includegraphics[width=\columnwidth]{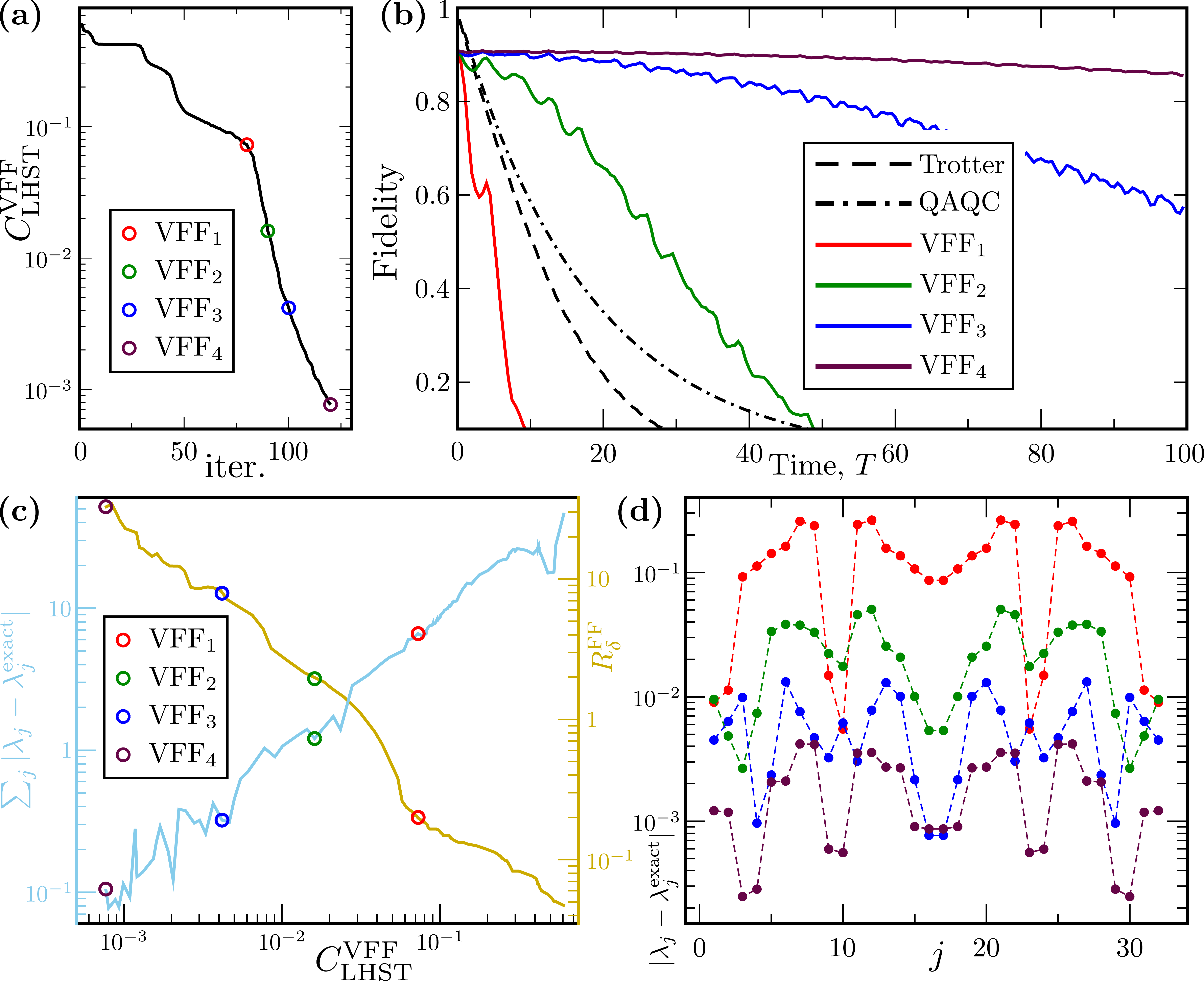}
\caption{\label{fig:Compare} Comparing different quantum simulation strategies. (a) The VFF cost function as it is iteratively minimized. Various quality diagonalizations are indicated in the colored circles. (b) Simulation fidelity as a function of time simulated. (c) Summed eigenenergy error  and fast forwarding as a function of the VFF cost. (d) Eigenenergy errors for the set of diagonalizations.}
\end{figure}

\begin{figure*}[t!]
    \centering
    \includegraphics[width=0.98\textwidth]{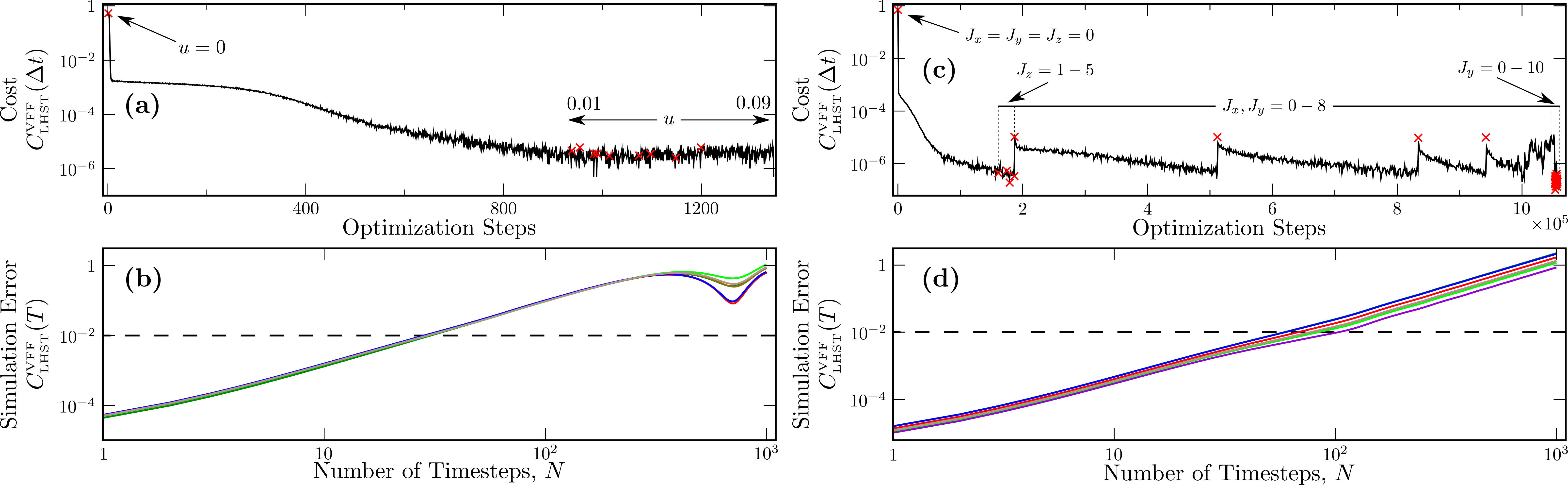}
    \caption{Finding successive diagonalizations across a range of parameters. (a,b) VFF of a two-site, two-qubit Hubbard quantum simulation unitary (4-qubit circuit). (a) Optimization error. Here, cost estimates were made with $n_\mathrm{samp} = 10^{6}$ and $\Delta t = 0.1$. We plot $C^{\mbox {\tiny VFF}}_{\mbox {\tiny LHST}}(\Delta t)$ versus optimization step for a sequence of parameters (see text). In this plot, red crosses depict the initial costs for each parameter before optimization. Each optimization was terminated after reaching $C_\mathrm{Threshold} = 10^{-6}$. After taking some time to diagonalize the initial unitary with $u = 0$, subsequent optimizations took just a few iterations. (b) Simulation error. Here, we plot $C^{\mbox {\tiny VFF}}_{\mbox {\tiny LHST}}(T)$ versus $N$ for all $u$. 
    For this level of optimization, and taking $T^\mathrm{Trot}_\delta$ to be one Trotter step, fast forwardings of approximately $30$ timesteps were achieved.
    (c,d) VFF of a three-qubit Heisenberg quantum simulation unitary (6-qubit circuit). (c) Optimization error. Estimates were made with $n_\mathrm{samp} = 10^{6}$ and $\Delta t = 0.1$. We plot $C^{\mbox {\tiny VFF}}_{\mbox {\tiny LHST}}(\Delta t)$ versus optimization step for a sequence of parameters (see text). In this plot, red x's depict the initial costs for each parameter before optimization. Each optimization was terminated after reaching $C_\mathrm{Threshold} = 10^{-6}$. (d) $C^{\mbox {\tiny VFF}}_{\mbox {\tiny LHST}}(T)$ versus $N$ plotted for all values of $J_z$, $J_x$, and $J_y$. Here, fast forwardings of approximately $70$ to $100$ timesteps were achieved.}
    \label{fig:Hubbard}
\end{figure*}

In addition, Eq.~\eqref{eqn:approxLowerBoundF} provides a termination condition for the variational portion of VFF. If one has a desired threshold for $\overline{F}(T)$, then this threshold can be guaranteed provided that $C_{\mbox{\tiny LHST}}^{\mbox{\tiny VFF}}(\Delta t)$ is below a certain value. Once $C_{\mbox{\tiny LHST}}^{\mbox{\tiny VFF}}(\Delta t)$ dips below this value, then the variational portion of VFF can be terminated. Specifically, the termination condition is $C_{\mbox{\tiny LHST}}^{\mbox{\tiny VFF}}(\Delta t) \leq C_{\text{Threshold}}$, where
\begin{align}
\label{eqn:termination}
    C_{\text{Threshold}} \approx \frac{1}{n} \left(\frac{1}{N}\sqrt{\frac{d+1}{ d}(1-\overline{F}(T))} -\epsilon_{\infty}^{\mbox{\tiny TS}}(\Delta t)\right)^2 \,,
\end{align}
with the approximation holding when $C_{\mbox{\tiny LHST}}^{\mbox{\tiny VFF}}(\Delta t)$ is small. Again, for the exact expression for $ C_{\text{Threshold}} $, see \ref{sec:SimulationErrors}.

\subsection{Implementations}\label{sec:Implementations}

Here we present results simulated classically and on quantum hardware. We refer the reader to Sec.~\ref{sec:Methods} for details about our ansatz and optimization methods.

For our results below, it will be convenient, for a given simulation error tolerance $\delta$, to define {\it fast-forwarding} as the case whenever $R^{\mbox {\tiny FF}}_{\delta} > 1$, where
\begin{equation}
    R^{\mbox {\tiny FF}}_{\delta} = T^{\mbox {\tiny FF}}_{\delta} / T^{\mbox {\tiny Trot}}_{\delta}
\end{equation} 
is the ratio of the simulation time achievable with VFF, $T^{\mbox {\tiny FF}}_{\delta}$, to the simulation time achievable with standard Trotterization, $T^{\mbox {\tiny Trot}}_{\delta}$.  We note that $T^{\mbox {\tiny Trot}}_{\delta}$ is a good empirical measure of the coherence time, since it can account for both decoherence and gate infidelity, and hence the condition $R^{\mbox {\tiny FF}}_{\delta} > 1$ intuitively captures the idea of simulation beyond the coherence time.

\subsubsection{Comparing VFF to Trotterization and Compiled Trotterizations}

Here, we compare the performance of VFF to that of two other simulation strategies. One of these strategies is the standard Trotterization approach depicted in Fig.~\ref{fig:Decomp}(a). Another strategy is to first optimally compile the Trotterization step to a short-depth gate sequence and use this optimal circuit (in place of the Trotterization step) for the approach in Fig.~\ref{fig:Decomp}(a). We refer to this as the QAQC strategy, since one can use the QAQC algorithm~\cite{khatrietal2019} to obtain the optimal compilation of the Trotterization step. With the QAQC strategy, when finding the optimal short-depth compilation we make no assumptions about the structure of the compiled circuit, which is in contrast with Trotterization, where the structure of the circuit is dictated by interactions in the Hamiltonian.

For concreteness, we consider the task of simulating the XY model, defined by the Hamiltonian
\begin{equation}
    H_\textrm{XY} = - \sum_i X^i X^{i+1} + Y^i Y^{i+1} \ ,
\end{equation}
where $X^i$ and $Y^i$ are Pauli operators on qubit $i$. We consider a five-qubit system with open boundary conditions. From the analytical diagonalization of the XY model \cite{lieb1961two}, it follows that the ansatz for the diagonal matrix $D$ can be truncated at the first nontrivial term, as described in the Methods Section (see Eq.~(\ref{eq:a})).

Figure~\ref{fig:Compare} summarizes our results. Panel (a) shows how the $C^{\mbox {\tiny VFF}}_{\mbox {\tiny LHST}}$ cost function is iteratively minimized during the optimization procedure. We selected four approximate diagonalizations corresponding to different cost values denoted by VFF${}_n$, $n=1,\ldots,4$, depicted by colored circles in Panel (a). Note that the colors match those used in other panels. Panel (b) compares these diagonalizations with different quantum simulation methods: Trotterization (dashed line), QAQC-compiled (dashed-dotted line) and VFF at different stages of optimization. We compare simulated time evolution governed by the XY model and observe how the fidelity decays with evolution time. The fidelity is given by $\mathrm{Tr}( \ketbra{\psi(T)}{\psi(T)}\rho(T))$, where $\ket{\psi(T)}$ is the exact evolved state and $\rho(T)$ is the state obtained with a noisy simulator. The initial state $\ket{\psi(0)}$ was chosen randomly such that it has nonvanishing overlap with every eigenstate of the Hamiltonian. The circuits were simulated using a noisy trapped-ion simulator with error model from Ref.~\cite{trout18}. We took error rates, as specified in their Fig.~14 and reduced them by a factor of five for clearer demonstration of VFF's capabilities. 

Results shown in Panel (b) indicate that QAQC performed better than Trotterization, which is expected as QAQC optimizes in a circuit space larger than that given by Trotterization. Both approaches give better results than VFF$_1$ (red curve). This confirms the intuition that at early stages of the VFF optimization (large values of $C^{\mbox {\tiny VFF}}_{\mbox {\tiny LHST}}$), the error of the diagonalization is too big to outperform other methods. As the cost function is decreased, the length of time one can simulate using VFF increases. Indeed, as one can see from the green, blue, and purple curves (which are associated with cost values $\lessapprox 10^{-2}$), VFF dramatically outperforms both Trotterization and QAQC. 

One can see another important feature in Panel (b). For short simulations Trotterization and QAQC are always more accurate than VFF, no matter how accurate the diagonalization is. That is because for small $T$, there are just a few time steps taken by Trotterization and QAQC implementations and the resulting circuits are shorter than VFF circuits implementing $W$, $D$ and $W^\dagger$. This disadvantage rapidly diminishes since VFF circuits do not grow with $T$ and the only error that impacts the fidelity comes from imperfect diagonalization. On the other hand, Trotter and QAQC circuits grow linearly with $T$ and as a result, fidelity is dominated by noise (and not imperfections in the decomposition).

Panel (c) shows how the fast-forwarding factor $R_\delta^\mathrm{FF}$ and the error in the eigenvalue approximation (pertinent if VFF is used for eigenvalue estimation as discussed in Section~\ref{sec:EnergyEstimates}) depend on the cost function $C^{\mbox {\tiny VFF}}_{\mbox {\tiny LHST}}$. The data suggest power-law dependence in both cases. Bringing the cost function down to $10^{-3}$ allows us to reduce the eigenvalue error below $0.1$ and reach a fast-forwarding factor of approximately $30$. Note that for VFF to be more efficient than Trotterization ($R_\delta^\mathrm{FF} > 1$), one has to lower the cost function below approximately $0.04$. For this case, $\delta$ is defined as $1 - \mathrm{Tr}( \ketbra{\psi(T)}{\psi(T)}\rho(T))$ and we considered $\delta=0.2$. Panel (d) presents a more detailed analysis of the eigenvalue error, showing how the error of individual eigenvalues is reduced as the cost function is minimized. 

\subsubsection{Using VFF to Fast Forward
Models Across a Range of Parameters}\label{sec:ImplementationsSimulator}

Here, we show how to use VFF to efficiently find diagonalizations for new models that are nearby in parameter space, from previously diagonalized models.

\bigskip

\textit{Hubbard Model.---}We applied VFF to Trotterized quantum simulation unitaries, $U(\Delta t) \approx e^{-i H_\mathrm{Hub} \Delta t}$, of the Fermi-Hubbard model
\begin{equation}
   H_\mathrm{Hub} = -\tau \sum_{i,j,\sigma} \left( c^\dagger_{i,\sigma} c_{j,\sigma} + c^\dagger_{j,\sigma} c_{i,\sigma} \right) + u \sum_{i=1}^N n_{i,\uparrow} n_{i,\downarrow} \; .
\end{equation}
Here, $c^\dagger_{i,\sigma}$ and $c_{i,\sigma}$ are electron creation and annihilation operators (resp.) for spin $\sigma \in \{ \downarrow, \uparrow \}$ at site $i$ and $n_{i,\sigma} = c^\dagger_{i,\sigma} c_{i,\sigma}$ is the electron number operator. The parameters $\tau$ and $u$ are the hopping strength and on-site interaction (resp.). We studied a two-site, two-qubit Fermi-Hubbard model \cite{linke2017measuring}, which, after translation via the Jordan-Wigner transform, takes the form
\begin{equation}
H_\mathrm{Hub,2} = -\tau(X \otimes I + I \otimes X) + u Z \otimes Z \; .
\end{equation}
We took $\tau = 1$ for our initial diagonalization, then perturbatively increased $u$ from $0$ to $0.1$ in increments of $0.01$. For $U(\Delta t)$, we used a first-order Trotterization of $\exp(-i H_\mathrm{Hub,2} \Delta t)$. We set a threshold for optimization of $10^{-6}$. We used a three-layer ansatz for $W$ and a two-layer ansatz for $D$, which we describe in Section~\ref{sec:Layered}. 

In representative results shown in Fig.~\ref{fig:Hubbard}(a,b), we see that, after an initial optimization taking a number of iteration steps, VFF reached the optimization threshold. Then, as we perturbed away from $u = 0$, VFF rapidly found new parameters that diagonalized $\exp(-i H_\mathrm{Hub,2} \Delta t)$ to below the cost threshold. For all approximate diagonalizations, for an error tolerance of $\delta = 10^{-2}$, the simulation error remained below this tolerance for $T = 30 \Delta t$. The diagonalization used $9$ single-qubit gates and $7$ two-qubit gates. The Trotterization used $2$ single-qubit gates and $1$ two-qubit gate. Thus, the fast-forwarded simulations used $9$ single-qubit layers and $7$ two-qubit gates, but the equivalent Trotterized simulations used $60$ single-qubit gates and $30$ two-qubit gates. Thus, VFF gave significant depth compression versus the Trotterized simulations, particularly with respect to entangling gates.

\begin{figure*}[!htbp]
\centering
\includegraphics[width=0.98\textwidth]{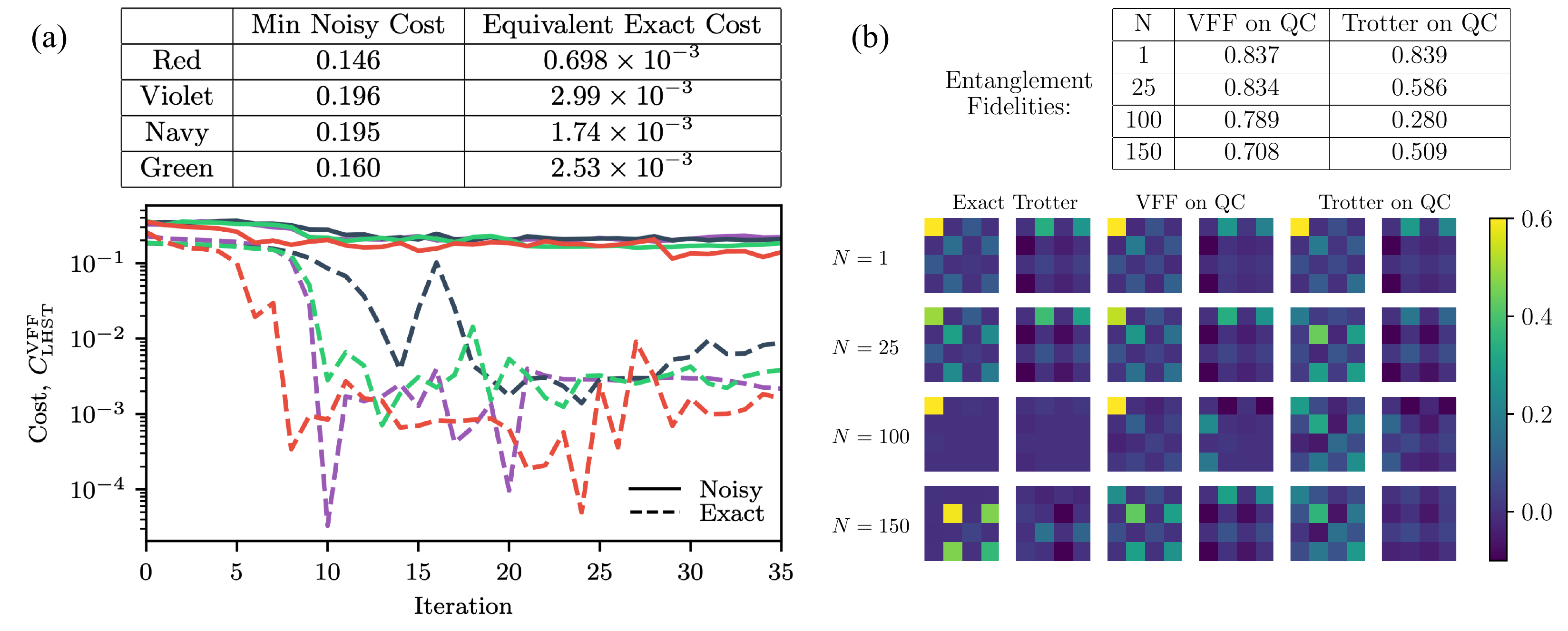}
\caption{\label{fig:Rigetti} VFF on quantum hardware. (a) Training results for single-qubit VFF implemented on the Rigetti Aspen-4 quantum computer. Here, the quantum circuit acted on two qubits, one with a random single-qubit unitary, $U$, and the second with the diagonal ansatz, $V = W D W^\dagger$. Optimization was performed using gradient descent of the VFF cost function. Results from four optimizations are shown. The plot shows the cost function evaluated on the QC (solid line) and the true (noiseless) cost function evaluated classically (dashed line) using the parameters found on the Rigetti QC via VFF. The table in (a) provides the optimal noisy cost values from the Rigetti QC and the equivalent true cost value for the given set of optimized parameters. (b) Process tomography for single-qubit VFF implemented on the Rigetti Aspen-4 quantum computer. Real (left) and imaginary (right) parts of the exact, classically computed process matrix of a first-order Trotterized quantum simulation (Exact Trotter) compared with a quantum simulation using an optimal diagonalization from the VFF shown in (a) (VFF on QC) and the first-order Trotterization (Trotter on QC), both computed on the Rigetti QC. The number of timesteps for the simulation are shown to the left. Note that for the VFF simulation, we used the optimization angles corresponding to the best cost from the {\it noisy} cost function, i.e., what was actually measured on the QC. To quantify the accuracy of the fast-forwarded simulation, we include a table in (b) containing the entanglement fidelity~\cite{EntFidelityNielsen} between the exact unitary and either the noisy process implemented by VFF or Trotterization respectively on the Rigetti QC.}
\end{figure*}

\bigskip

\textit{Heisenberg Model.---}Next, we applied VFF to the Heisenberg model,
\begin{equation} \label{eq:heisHam}
   H_\mathrm{Heis} = \sum_i J_z Z^i Z^{i+1} + J_x X^i X^{i+1} + J_y Y^i Y^{i+1} + h Z^i \; ,
\end{equation}
where $X^j$, $Y^j$, and $Z^j$ are Pauli spin matrices acting on qubit $j$, and $h$, $J_x$, $J_y$, and $J_z$ are parameters.

Here, we took $h = 1.0$ and investigated the model acting on three qubits (whose Hamiltonian we denote $H_\mathrm{Heis,3}$). We used a first-order Trotterization of $\exp(-i H_\mathrm{Heis,3} \Delta t)$. We set an optimization threshold of $10^{-6}$ and used a ten-layer ansatz for $W$ and a two-layer ansatz for $D$. From $J_z = 1.0$ (a non-interacting Hamiltonian) we increased $J_z$ to $5.0$ in increments of $1.0$. For these parameter values, $H_\mathrm{Heis}$ is an anti-ferromagnetic classical Ising model. 

Next, we kept $h = 1.0$ and $J_z = 5.0$ fixed and increased $J_x = J_y$ from $0.0$ to $8.0$ in increments of $2.0$. When $J_x = J_y$, these are often called XXZ Heisenberg models. 

Finally, we kept $h = 1.0$, $J_z = 5.0$, $J_x = 8.0$ and varied $J_y$ from $0.0$ to $10.0$ in increments of $1.0$ (XYZ Heisenberg models). 

As may be seen in the representative results plotted in Fig.~\ref{fig:Hubbard}(c,d), VFF rapidly found new diagonalizations $WDW^\dagger \approx \exp(-i H_\mathrm{Heis,3} \Delta t)$ for all models considered. We performed additional searches for diagonalizations of ferromagnetic models ($J_z$, $J_x$, and $J_y < 0$) with similar results. For all approximate diagonalizations, the simulation error remained below an error tolerance of $\delta = 10^{-2}$, up to $T \approx 100 \Delta t$. For this simulation time, each diagonalization used $40$ two-qubit gates and $71$ single-qubit gates ($111$ total), whereas each Trotterization used $1200$ two-qubit gates and $2500$ single-qubit gates ($3700$ total). 

\subsubsection{VFF Implemented on Quantum Hardware}\label{sec:Rigetti}

We implemented VFF on $1 + 1$ qubits (i.e. diagonalizing a random single-qubit unitary) on the Rigetti Aspen-4 quantum computer (Fig.~\ref{fig:Rigetti}). Here we considered the first-order Trotterization of the Hamiltonian $H = \alpha_x \sigma_x + \alpha_y \sigma_y + \alpha_z \sigma_z$, where $\alpha$ was a randomly chosen unit vector, at the time $\Delta t = 0.5$. We used $W =  R_z(\theta_z) R_x(\theta_x)$ and $D = R_z(\gamma_z)$. The VFF cost function, as evaluated on the QC with $n_\mathrm{samp} = 10^4$, was optimized to $C^{\mbox {\tiny VFF}}_{\mbox {\tiny LHST}}(\Delta t) \approx 10^{-1}$. 

With this system, we investigated how well VFF performed by classically computing the true, noiseless, cost for the parameters found on the Rigetti QC. This true cost converged to two orders of magnitude below the QC-evaluated cost, demonstrating significant robustness of VFF to the noise on the Rigetti QC.

We next simulated single qubit evolution on the QC by 1) iterating the original Trotterization, $U(\Delta t)^\mathrm{N}$, and 2) using the VFF diagonalization (\ref{VFF}). We then used process tomography to compare the resultant noisy process resulting from the Trotterization and the process resulting from VFF to the exact process $U(\Delta t)^\mathrm{N}$ calculated classically.

In this single qubit case, the Trotterized simulation unitary could have been compiled to a circuit with many fewer gates; however, this would not be true for higher dimensional unitaries and for this reason we did not compile the iterated gate sequence here. 

In Fig.~\ref{fig:Rigetti}(b), we show that VFF performed much better than the iterated Trotterization, giving a high fidelity simulation. In these results, the entanglement fidelity between the process implemented using VFF and the exact process remained high until at least $N_\mathrm{VFF} = 150$ and never reached a value below $0.7$. On the other hand, the fidelity of the iterated Trotterization approach was already $0.586$ by $N=25$.

These results demonstrate that VFF on current quantum computers can allow for simulations beyond the coherence time. 
For example, taking an entanglement infidelity of 0.3 as our error tolerance $\delta$, it follows from the table in Fig.~\ref{fig:Rigetti}(b) that we obtained a fast-forwarding beyond the coherence time of at least $R^{\mbox {\tiny FF}}_{\delta} = 6$.

\subsubsection{Estimating Energy Eigenvalues}\label{sec:EnergyEstimates}

We primarily foresee VFF being used to study the long time evolution of the observables of a system. But one may also use VFF to reduce the gate complexity of eigenvalue estimation algorithms such as Quantum Phase Estimation (QPE)~\cite{KitaevQPE} or time series analyses~\cite{Somma2002TS, Somma2019TS}. Such algorithms require simulating a Hamiltonian up to time $T = \OC\left(\frac{1}{\sigma}\right)$ to obtain eigenvalue estimates of accuracy $\sigma$. Due to the constant depth circuits produced by VFF, we can therefore reduce the number of gates required for these algorithms by a factor of $\OC\left(\frac{1}{\sigma}\right)$, increasing the viability of eigenvalue estimation on noisy quantum computers. 

The eigenvalues of the Hamiltonian simulated via VFF are not directly accessible from the diagonal unitary $D$ since they are encoded in a set of Pauli operators. However, these can be extracted using the time series analysis in \cite{Somma2019TS}. This method does not require large ancillary systems nor large numbers of controlled-unitary operations and thus is a promising avenue for eigenvalue estimation in the NISQ era.

To demonstrate the practical utility of VFF for eigenvalue estimation we numerically computed the spectrum of a two-site Hubbard model in Fig.~\ref{fig:Eigenenergy}(a-c) and in Fig.~\ref{fig:Eigenenergy}(d), we show eigenenergy estimates for a 5-qubit XY model. Specifically, we used a one-clean-qubit (DQC-1) quantum circuit to discretely sample the function
\begin{equation}
    g(t) = \mathrm{Tr}\left( \dya{\psi} \; e^{-iDt} \right) = \sum_j e^{-i \lambda_j t} \,,
\end{equation}
where $| \psi \rangle := \frac{1}{\sqrt{2}}(\ket{0} + \ket{1})^{\otimes n}$, and then used classical time series analysis to estimate the eigenvalues. This is achieved by computing each spectral estimate $S(\lambda)$ with respect to a discrete number of values for time variable, $t_j = \{ 0, \dots, t_{\mbox{\tiny max}}  \}$ in increments of $0.2 \Delta t$. A higher number of discrete points results in a better resolution of $S(\lambda)$. The signal processing uncertainty principle constrains the spectral widths (variance in the estimate of $\lambda$) to obey $\sigma_{\lambda_j} t_{\mbox{\tiny max}} \ge c$, where $c$ is a constant of order $1$. In Fig.~\ref{fig:Eigenenergy}(a-c), we show three examples with successively better optimization and, hence, longer integration times, $t_\mathrm{max}$. We plot eigenenergy estimates of diagonalized two-site Hubbard models with parameters $\tau = 1$ and $u \in \{ 0.0, 0.2, \dots, 1.0 \}$ ranging from weakly to strongly coupled models.

\begin{figure}[t]
\centering
\includegraphics[width=0.48\textwidth]{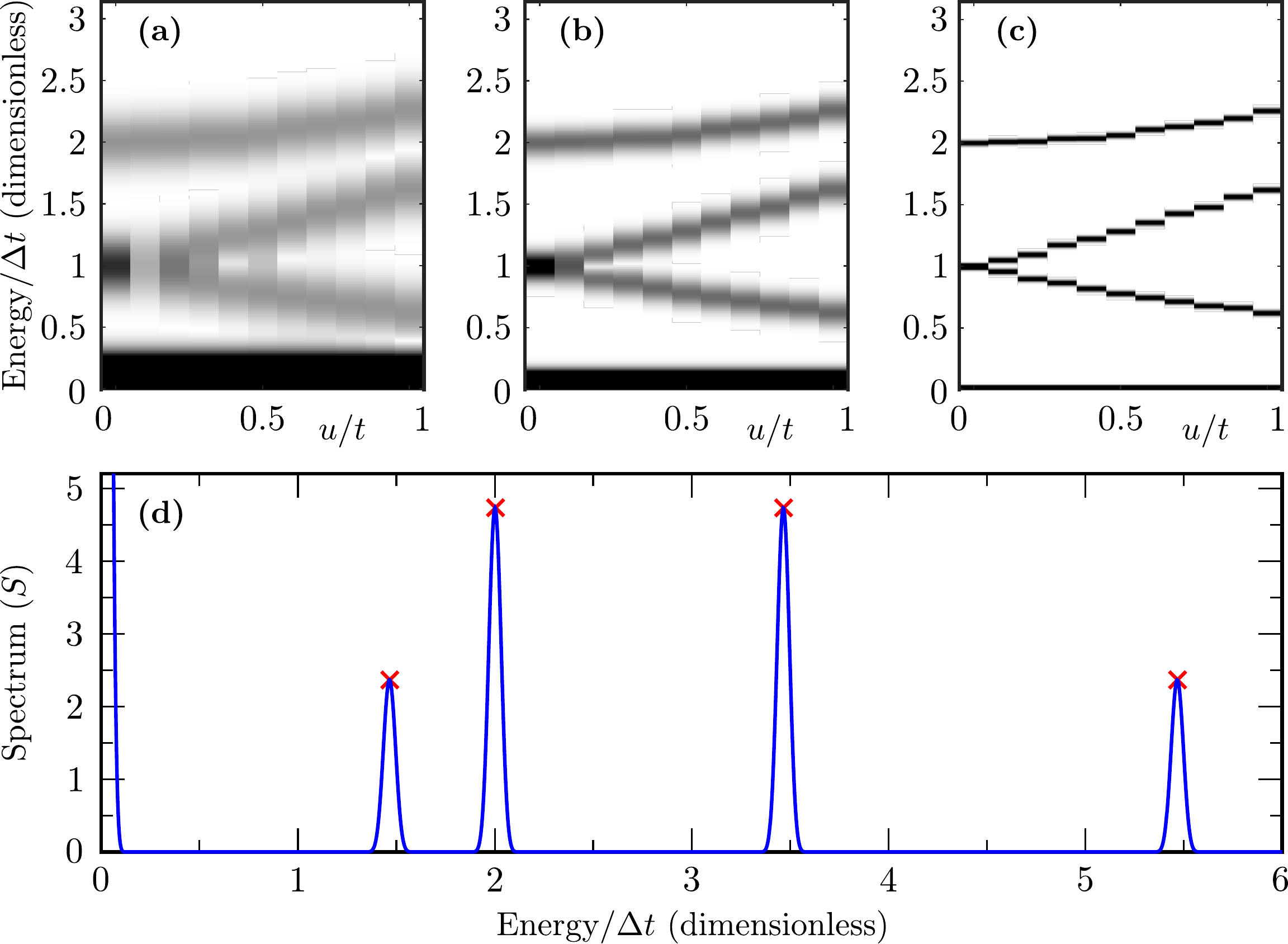}
\caption{\label{fig:Eigenenergy} Estimating energy eigenvalues using VFF. (a,b,c) $S(\lambda)$ estimates from a VFF diagonalization of a two-site Hubbard model with $u/\tau = \{0.0, \dots, 1.0\}$. Only positive eigenenergies are shown. For each of (a,b,c) we chose $t_{\mbox{\tiny max}}$ to be $T^{\mbox {\tiny FF}}_{10^{-2}}$, the simulation time achievable with a cost less than $10^{-2}$, with $t_{\mbox{\tiny max}} = T^{\mbox {\tiny FF}}_{10^{-2}} = 500 \Delta t, 1000 \Delta t, 5000 \Delta t  $ in (a), (b), (c) respectively. The resolution of $S(\lambda)$ scales inversely with $T^{\mbox {\tiny FF}}_{10^{-2}}$ causing the width of the spectral peaks to get successively narrower as $t_{\mbox{\tiny max}} = T^{\mbox {\tiny FF}}_{10^{-2}}$ is increased. (d) $S(\lambda)$ estimate from a VFF diagonalization of a 5-qubit Heisenberg XY model. The different spectral heights are due to degenerate eigenvalues (e.g. the multiplicity of the peak at an energy of $2$ is twice that at $1.4$)}
\end{figure}

The time series analysis extracts an estimate for the spectrum of energies corresponding to $V$, the approximate unitary given by VFF of the target unitary $U$, up to a global phase. The Hoffman-Wielandt theorem \cite{hoffman1953} gives a bound on the total variation distance between spectra of $U$ and $V$, in terms of the 2-norm which in turn is directly related to the VFF cost function. For the concrete bounds we refer to \ref{ap:Eigen}. This illustrates that the estimated spectral differences resulting from classical time series analysis give better approximations to the energy differences of the target Hamiltonian when decreasing our VFF cost function. In Fig.~\ref{fig:Compare}(c,d) we provide additional numerical analysis supporting this feature.

\section{Discussion}\label{sec:Discussion}

We presented a new variational method for quantum simulation called Variational Fast Forwarding (VFF). Our results showed that, once a diagonalization is in hand, one could form an approximate fast forwarding of the simulation that allowed for quantum simulations beyond the coherence time. We compared VFF simulation fidelities for a range of optimization errors with Trotterized and compiled-Trotterized simulations and showed that, as long as the VFF optimization error was sufficiently small, VFF could indeed fast-forward quantum simulations. For the particular models, ans\"atze, and thresholds that we studied, we were able to fast forward simulations by factors of approximately $30$ (Hubbard) and $\sim 80$ (Heisenberg) simulation timesteps. In addition, a fast-forwarding of a factor of at least $6$, relative to a Trotterization approach, was found experimentally on Rigetti's quantum hardware. We also explored the use of VFF for simplifying eigenenergy estimates and showed that the variance of eigenenergy estimates is reduced commensurately with the cost function. Essentially, the more accurate the diagonalization step of VFF is (i.e., the lower the cost function value), the longer is the achievable fast-forwarding simulation time and the better the eigenenergy estimate.

A crucial feature of VFF is the operational meaning of its cost function as a bound on average-case simulation error. Hence, any reduction in the cost results in a tighter bound on the simulation error. We used this feature to define a termination condition for the variational portion of VFF, such that once the cost is below a particular value, one can guarantee that the simulation error will be below a desired threshold. This is arguably the most important feature that distinguishes VFF from prior work on Subspace Variational Quantum Simulation (SVQS)~\cite{HeyaEtAl2019}, whose cost function does not have an obvious meaning in terms of simulation error. In addition, since VFF is not targeting a low-energy subspace, it is capable of simulating systems at moderate to high-temperature or more dramatic dynamics such as quenches. The tradeoff is that the diagonalization step of VFF can be more difficult than that of SVQS, since one is diagonalizing over the entire space rather than a subspace. This tradeoff will be important to study in future work.

In the NISQ era, the minimum value of the VFF cost function that can be achieved will be limited by quantum hardware noise. On the one hand, this will result in loose bounds on the simulation error obtained from~\eqref{eqn:approxLowerBoundF}. On the other hand, we have seen from our implementation of VFF on Rigetti's quantum hardware that the true (noiseless) cost is often orders of magnitude lower than the noisy cost, implying that we learned the correct optimal parameters despite the noise. This noise resilience is analogous to analytical and numerical results recently reported in \cite{SharmaEtAl2019}. Hence, an important direction of future research would be to tighten our bound \eqref{eqn:approxLowerBoundF} for specific noise models, which would allow for tight simulation error bounds in the presence of noise.

Finally, a possible limitation of the scalability of VFF is the No Fast-Forwarding Theorem, which is stated in a variety of forms~\cite{atia2017fast,BerryEtAl2007,Childs:2010:LSN:2011373.2011380}, but basically says that there exist some families of Hamiltonians for which asymptotically the number of gates needed for quantum simulation must grow roughly in proportion to the simulation time. Thus VFF may not work for large scale and/or long time simulations of these Hamiltonians, perhaps because the circuit depth needed to achieve an accurate diagonalization will be long or perhaps because the cost landscape will be difficult to optimize. 
Nonetheless, there are many physically interesting Hamiltonians that are fast-forwardable or close to (i.e., perturbations of) models that are known to be fast-forwardable.
Moreover, fast-forwardable Hamiltonians can generate highly non-classical behavior~\cite{novo2019quantum}.
Hence, future work needs to explore the class of Hamiltonians that are approximately fast-forwardable.

\section{Methods}\label{sec:Methods}

\subsection{Ansatz}\label{sec:Layered}

As with many variational quantum-classical algorithms, it is natural to employ a layered gate structure for $W(\thv)$ and $D(\gav, \Delta t)$, with the number of layers being a refinement parameter. 

\subsubsection{Ansatz for $D$}

Let us first consider an ansatz for $D$. The problem of constructing quantum circuits for diagonal unitaries, $D$, is equivalent to finding a Walsh series approximation~\cite{Welch2014}
\begin{equation} \label{eq:walsh}
  D = e^{i G} = \prod_{j=0}^{2^q-1} e^{i \gamma_j \bigotimes_{k=1}^n (Z_k)^{j_k}} \; ,
\end{equation}
where $q = n$, $G$ and $Z_k$ are diagonal operators with the Pauli operator $Z_k$ acting on the $k$-th qubit, and $j_k$ is the $k$-th bit in a bitstring $j$. Efficient quantum circuits for minimum depth approximations of $D$ may be obtained by resampling the function on the diagonal of $G$ at sequencies lower than a fixed threshold, with $q = k$ and $k \leq n$. The resampled diagonal takes the same form as \eqref{eq:walsh} but with $q = k$. The error after resampling is $\epsilon_k \leq \mathrm{sup}_x |G'(x)|/2^k$, where we have introduced a coordinate along the diagonal, $x$. While we do not know $G$, we can assume a particular ansatz for terms to include in the expansion.

In all of our implementations, we use a re-ordering of terms in Eq.~(\ref{eq:walsh}). Namely, we take
\begin{equation} \label{eq:a}
    D = \prod_{m=0}^n \prod_{j \in S_m} e^{i  \gamma_j 
    \bigotimes_{k=1}^n (Z_k)^{j_k}} \ ,
\end{equation}
where $S_m$ is a set of all indices $j$ such that $\sum_{k=1}^n j_k = m$. Note that the $l$-local terms, $\bigotimes_{k=1}^n (Z_k)^{j_k}$, $\sum_{k=1}^n j_k = l$, in Eq.~(\ref{eq:a}) are organized in increasing order. We truncate the above product to a small number (up to $l=2$) of initial $l$-local terms. The accuracy of the approximation is controlled by truncating the expansion in Eq.~(\ref{eq:a}). The above expansion may be more suited than Eq.~(\ref{eq:walsh}) for quantum many-body Hamiltonians. For instance, it is known that the quantum Ising model in a transverse field can be diagonalized exactly by keeping only $1$-local terms.

\subsubsection{Ansatz for $W$}

Let us now consider an ansatz for $W(\thv)$. With the Baker-Campbell-Hausdorff formula we may generate any eigenvector unitary, $W(\thv)$, by appropriately interleaving non-commuting unitaries \cite{LloydQuantumSim,khatrietal2019}. In general, this requires order $d^2$ parameterized operations. Here, we briefly discuss two approaches to make its generation tractable.

The first approach is to use a fixed, layered ansatz for $W(\thv)$. By alternating sets of single- and two-qubit unitaries, we construct a polynomial number of non-commuting layers capable of generating a rich set of parameterized unitaries. Translational invariance of the system Hamiltonian may be incorporated into the ansatz for $W(\thv)$. In this case, all gates in a given layer may be chosen to be the same. As a result, the number of variational parameters is reduced by a factor of $n$. 

Another approach is to employ a randomized ansatz, in which parameterized gates are randomly placed.
This approach may be more suitable for irregular Hamiltonians $H$, where the optimal form of $W(\thv)$ is not easily deducible from $H$. The randomized approach may potentially find a shorter $W(\thv)$ that contains fewer gates, which is beneficial for near-term applications. Ref.~\cite{larose2018} discusses further details of both methods.

\subsubsection{Growing the Ansatz and Parameter Initialization}

We use the method of growing the ansatz in order to mitigate the problem of getting trapped in local minima during the optimization~\cite{larose2018, grant2019initialization}. This technique can be used with both ans\"atze mentioned above. The optimization is initiated with a shallow circuit containing only a few variational parameters. After a local minimum is found, we add a resolution of the identity to the ansatz for $W(\thv)$. This takes a form of a layer of unitaries (for a layered ansatz) or a smaller block of parameterized gates (for a randomized ansatz) that evaluates to the identity. Adding such structures to $W(\thv)$ does not change the value of the cost function but it increases the number of variational parameters. In the enlarged space, local minima encountered in previous steps may be turned into saddle points and the cost function may be further minimized towards the global minimum. The technique of systematically growing the ansatz to improve the quality of the result and mitigate the problem of local minima is described in detail in \cite{larose2018}.

In order to approach the issue of initializing the parameters $\thv$ and $\gav$, we often use a perturbative method~\cite{McCleanPerturb,GarciaSaezPerturb} in which we pre-train these parameters for a slightly different Hamiltonian. Namely, we begin a VFF search for a unitary diagonalization with a known short-depth, readily diagonalizable, unitary. We then modify the Hamiltonian by adding successively perturbed terms in an attempt to guide the previously learned diagonalization from known initial parameters toward an unknown diagonalization of interest. 

\subsubsection{Ansatz for Implementations}

General ansatz considerations were discussed above, and now we discuss the specific ans\"atze used in our implementations. For our implementations, $W$ consists of successive layers, each formed of three sub-layers: (i) an initial sub-layer of single-qubit gates, (ii) a second sub-layer of entangling two-qubit gates acting on neighboring even-odd qubit pairs, and (iii) a third sub-layer of two-qubit gates acting on odd-even qubit pairs. The two-qubit gates are typically CNOTs, but equivalently we have used $\mathrm{ZZ}(\theta) = \mathrm{CNOT} (I\otimes R_z(\theta)) \mathrm{CNOT}$ or $\mathrm{XX}(\theta)$ gates. The layers are appended successively always with a final layer of single-qubit gates.

In addition, our implementations use a set of layers consisting of various commuting operators for $D$. For the first layer we use a set of single-qubit $Z$-rotations, $R_z(\gamma)$, acting on all qubits. The second layer is a set of two-qubit $ZZ(\gamma)$ gates acting on all pairs of qubits. The third layer would be a set of three-qubit gates $Z\otimes Z\otimes Z(\gamma)$ acting on all triplets of qubits. However, for the threshold used, we did not need a third layer for the results in Sec.~\ref{sec:Implementations}.

\subsection{Optimization via Gradient Descent}\label{sec:gd}

Gradient-based approaches can improve convergence of variational quantum-classical algorithms~\cite{harrow2019low}, and the optimizer performance can be further enhanced by judiciously adapting the shot noise for each partial derivative~\cite{kubler2019adaptive}. Furthermore, the same quantum circuit used for cost estimation can be used for gradient estimation~\cite{mitarai2018quantum}.  Therefore, we recommend a gradient-based approach for VFF, in what follows.

With the ansatz in \eqref{eqn:Vansatz}, we denote the VFF cost function as $C_{\mbox{\tiny LHST}}^{\mbox{\tiny VFF}} :=  C_{\mbox{\tiny LHST}}(U, W D W^\dagger)$. The partial derivative of this cost function with respect to $\theta_k$, a parameter of the eigenvector operator $W(\thv)$, is
\begin{equation} \label{eq:CostEigenvectorGradient}
    \begin{split}
        \frac{\partial C_{\mbox{\tiny LHST}}^{\mbox{\tiny VFF}}}{\partial \theta_k} = \frac{1}{2} \Big( & C_{\mbox{\tiny LHST}}(U, W_+^k D W^\dagger) \\
        - \ &  C_{\mbox{\tiny LHST}}(U, W_-^k D W^\dagger) \\
        + \ & C_{\mbox{\tiny LHST}}(U, W D (W_+^k)^\dagger) \\
        - \ & C_{\mbox{\tiny LHST}}(U, W D (W_{-}^k)^{\dagger}) \Big) \ .
    \end{split}
\end{equation}
The operator $W_+^k$ ($W_-^k$) is generated from the original eigenvector operator $W(\thv)$ by the addition of an extra $\frac{\pi}{2}$ ($-\frac{\pi}{2}$) rotation about a given parameter's rotation axis:
\begin{equation}
    W_{\pm}^k := W\left( \thv_{\pm}^k \right) \ \ \text{with}  \ \ (\theta_{\pm}^k)_i := \theta_i \pm \frac{\pi}{2} \delta_{i,k} \; .
\end{equation}
Similarly, the partial derivative with respect to $\gamma_\ell$, a parameter of the diagonal operator $D(\gav)$, is
\begin{equation} \label{eq:CostDiagonalGradient}
    \begin{split}
        \frac{\partial C_{\mbox{\tiny LHST}}^{\mbox{\tiny VFF}}}{\partial \gamma_\ell}   =  \frac{1}{2} \Big( & C_{\mbox{\tiny LHST}}\left( U, W  D_+^{\ell} W^\dagger \right)  \\
     - \ & C_{\mbox{\tiny LHST}}\left( U, W  D_-^{\ell} W^\dagger \right) \Big)
    \end{split}
\end{equation}
with 
\begin{equation}
    D_{\pm}^{\ell} := D\left( \gav_{\pm}^{\ell} \right) \ \ \text{with}  \ \ (\gamma_{\pm}^{\ell})_i := \gamma_i \pm \frac{\pi}{2} \delta_{i,\ell} \; .
\end{equation}
Equation~\eqref{eq:CostDiagonalGradient} is derived in~\cite{khatrietal2019} and we derive Eq.~\eqref{eq:CostEigenvectorGradient} in \ref{ap:GradientCalc}.

Using~\eqref{eq:CostEigenvectorGradient} and~\eqref{eq:CostDiagonalGradient}, we can evaluate the gradient of $C_{\mbox{\tiny LHST}}^{\mbox{\tiny VFF}}$ directly and use a simple gradient descent iteration
\begin{eqnarray}
  \theta^{(t+1)}_k & = & \theta^{(t)}_k - \eta \frac{\partial C_{\mbox{\tiny LHST}}^{\mbox{\tiny VFF}}}{\partial \theta_k} \\
  \gamma^{(t+1)}_\ell & = & \gamma^{(t)}_\ell - \eta \frac{\partial C_{\mbox{\tiny LHST}}^{\mbox{\tiny VFF}}}{\partial \gamma_\ell} \; , 
\end{eqnarray}
to minimize $C_{\mbox{\tiny LHST}}^{\mbox{\tiny VFF}}$.

\subsection{Data availability} The authors declare that the main data supporting the findings of this study are available within the article and its Supplementary
Information files. Extra data sets are available upon request.

\section{Acknowledgments}

We thank Rolando Somma and Sumeet Khatri for helpful discussions. We thank Rigetti for providing access to its quantum computers. The views expressed in this paper are those of the authors and do not reflect those of Rigetti. CC, ZH, and JI acknowledge support from the U.S. Department of Energy (DOE) through a quantum computing program sponsored by the LANL Information Science \& Technology Institute. CC acknowledges support from the EPSRC National Quantum Technology Hub in Networked Quantum Information Technologies. ZH acknowledges support from the EPSRC Centre for Doctoral Training in Controlled Quantum Dynamics. LC was supported initially by the U.S. DOE through the J. Robert Oppenheimer fellowship and subsequently by the DOE, Office of Science, Basic Energy Sciences, Materials Sciences and Engineering Division, Condensed Matter Theory Program. PJC and AS acknowledge initial support from the DOE ASC Beyond Moore's Law program and subsequent support from LANL's Laboratory Directed Research and Development (LDRD) program. 

\medskip

\textbf{Competing financial interests:} The authors declare no competing financial interests.

\section{Author contributions}
The project was conceived by LC, PJC, and AS. The VFF algorithm was formulated by CC, ZH, JI, LC, PJC, and AS. The manuscript was written by CC, ZH, PJC, and AS. For the analytical results, CC and PJC performed the error analysis, PJC derived the termination condition, and ZH and AS derived the gradient formulas. For the numerical results, LC and AS performed the simulator implementations, while ZH performed the quantum hardware implementations.

\vfill

\begin{thebibliography}{58}%
\makeatletter
\providecommand \@ifxundefined [1]{%
 \@ifx{#1\undefined}
}%
\providecommand \@ifnum [1]{%
 \ifnum #1\expandafter \@firstoftwo
 \else \expandafter \@secondoftwo
 \fi
}%
\providecommand \@ifx [1]{%
 \ifx #1\expandafter \@firstoftwo
 \else \expandafter \@secondoftwo
 \fi
}%
\providecommand \natexlab [1]{#1}%
\providecommand \enquote  [1]{``#1''}%
\providecommand \bibnamefont  [1]{#1}%
\providecommand \bibfnamefont [1]{#1}%
\providecommand \citenamefont [1]{#1}%
\providecommand \href@noop [0]{\@secondoftwo}%
\providecommand \href [0]{\begingroup \@sanitize@url \@href}%
\providecommand \@href[1]{\@@startlink{#1}\@@href}%
\providecommand \@@href[1]{\endgroup#1\@@endlink}%
\providecommand \@sanitize@url [0]{\catcode `\\12\catcode `\$12\catcode
  `\&12\catcode `\#12\catcode `\^12\catcode `\_12\catcode `\%12\relax}%
\providecommand \@@startlink[1]{}%
\providecommand \@@endlink[0]{}%
\providecommand \url  [0]{\begingroup\@sanitize@url \@url }%
\providecommand \@url [1]{\endgroup\@href {#1}{\urlprefix }}%
\providecommand \urlprefix  [0]{URL }%
\providecommand \Eprint [0]{\href }%
\providecommand \doibase [0]{http://dx.doi.org/}%
\providecommand \selectlanguage [0]{\@gobble}%
\providecommand \bibinfo  [0]{\@secondoftwo}%
\providecommand \bibfield  [0]{\@secondoftwo}%
\providecommand \translation [1]{[#1]}%
\providecommand \BibitemOpen [0]{}%
\providecommand \bibitemStop [0]{}%
\providecommand \bibitemNoStop [0]{.\EOS\space}%
\providecommand \EOS [0]{\spacefactor3000\relax}%
\providecommand \BibitemShut  [1]{\csname bibitem#1\endcsname}%
\let\auto@bib@innerbib\@empty
\bibitem [{\citenamefont {Feynman}(1982)}]{feynman1982simulating}%
  \BibitemOpen
  \bibfield  {author} {\bibinfo {author} {\bibfnamefont {R.~P.}\ \bibnamefont
  {Feynman}},\ }\bibfield  {title} {\enquote {\bibinfo {title} {Simulating
  physics with computers},}\ }\href {\doibase 10.1007/BF02650179} {\bibfield
  {journal} {\bibinfo  {journal} {International Journal of Theoretical
  Physics}\ }\textbf {\bibinfo {volume} {21}},\ \bibinfo {pages} {467--488}
  (\bibinfo {year} {1982})}\BibitemShut {NoStop}%
\bibitem [{\citenamefont {Jaksch}\ \emph {et~al.}(1998)\citenamefont {Jaksch},
  \citenamefont {Bruder}, \citenamefont {Cirac}, \citenamefont {Gardiner},\
  and\ \citenamefont {Zoller}}]{JakschSim98}%
  \BibitemOpen
  \bibfield  {author} {\bibinfo {author} {\bibfnamefont {D.}~\bibnamefont
  {Jaksch}}, \bibinfo {author} {\bibfnamefont {C.}~\bibnamefont {Bruder}},
  \bibinfo {author} {\bibfnamefont {J.~I.}\ \bibnamefont {Cirac}}, \bibinfo
  {author} {\bibfnamefont {C.~W.}\ \bibnamefont {Gardiner}}, \ and\ \bibinfo
  {author} {\bibfnamefont {P.}~\bibnamefont {Zoller}},\ }\bibfield  {title}
  {\enquote {\bibinfo {title} {Cold bosonic atoms in optical lattices},}\
  }\href {\doibase 10.1103/PhysRevLett.81.3108} {\bibfield  {journal} {\bibinfo
   {journal} {Phys. Rev. Lett.}\ }\textbf {\bibinfo {volume} {81}},\ \bibinfo
  {pages} {3108--3111} (\bibinfo {year} {1998})}\BibitemShut {NoStop}%
\bibitem [{\citenamefont {Bloch}\ \emph {et~al.}(2012)\citenamefont {Bloch},
  \citenamefont {Dalibard},\ and\ \citenamefont {Nascimb{\`e}ne}}]{BlochSim12}%
  \BibitemOpen
  \bibfield  {author} {\bibinfo {author} {\bibfnamefont {I.}~\bibnamefont
  {Bloch}}, \bibinfo {author} {\bibfnamefont {J.}~\bibnamefont {Dalibard}}, \
  and\ \bibinfo {author} {\bibfnamefont {S.}~\bibnamefont {Nascimb{\`e}ne}},\
  }\bibfield  {title} {\enquote {\bibinfo {title} {Quantum simulations with
  ultracold quantum gases},}\ }\href {\doibase 10.1038/nphys2259} {\bibfield
  {journal} {\bibinfo  {journal} {Nat. Phys.}\ }\textbf {\bibinfo {volume}
  {8}},\ \bibinfo {pages} {267--276} (\bibinfo {year} {2012})}\BibitemShut
  {NoStop}%
\bibitem [{\citenamefont {Islam}\ \emph {et~al.}(2011)\citenamefont {Islam},
  \citenamefont {Edwards}, \citenamefont {Kim}, \citenamefont {Korenblit},
  \citenamefont {Noh}, \citenamefont {Carmichael}, \citenamefont {Lin},
  \citenamefont {Duan}, \citenamefont {Joseph~Wang}, \citenamefont
  {Freericks},\ and\ \citenamefont {Monroe}}]{MonroeSim2011}%
  \BibitemOpen
  \bibfield  {author} {\bibinfo {author} {\bibfnamefont {R.}~\bibnamefont
  {Islam}}, \bibinfo {author} {\bibfnamefont {E.~E.}\ \bibnamefont {Edwards}},
  \bibinfo {author} {\bibfnamefont {K.}~\bibnamefont {Kim}}, \bibinfo {author}
  {\bibfnamefont {S.}~\bibnamefont {Korenblit}}, \bibinfo {author}
  {\bibfnamefont {C.}~\bibnamefont {Noh}}, \bibinfo {author} {\bibfnamefont
  {H.}~\bibnamefont {Carmichael}}, \bibinfo {author} {\bibfnamefont {G.~D.}\
  \bibnamefont {Lin}}, \bibinfo {author} {\bibfnamefont {L.~M.}\ \bibnamefont
  {Duan}}, \bibinfo {author} {\bibfnamefont {C.~C.}\ \bibnamefont
  {Joseph~Wang}}, \bibinfo {author} {\bibfnamefont {J.~K.}\ \bibnamefont
  {Freericks}}, \ and\ \bibinfo {author} {\bibfnamefont {C.}~\bibnamefont
  {Monroe}},\ }\bibfield  {title} {\enquote {\bibinfo {title} {Onset of a
  quantum phase transition with a trapped ion quantum simulator},}\ }\href
  {\doibase 10.1038/ncomms1374} {\bibfield  {journal} {\bibinfo  {journal}
  {Nat. Commun.}\ }\textbf {\bibinfo {volume} {2}},\ \bibinfo {pages} {377}
  (\bibinfo {year} {2011})}\BibitemShut {NoStop}%
\bibitem [{\citenamefont {Gorman}\ \emph {et~al.}(2018)\citenamefont {Gorman},
  \citenamefont {Hemmerling}, \citenamefont {Megidish}, \citenamefont
  {Moeller}, \citenamefont {Schindler}, \citenamefont {Sarovar},\ and\
  \citenamefont {Haeffner}}]{GormanSim18}%
  \BibitemOpen
  \bibfield  {author} {\bibinfo {author} {\bibfnamefont {D.~J.}\ \bibnamefont
  {Gorman}}, \bibinfo {author} {\bibfnamefont {B.}~\bibnamefont {Hemmerling}},
  \bibinfo {author} {\bibfnamefont {E.}~\bibnamefont {Megidish}}, \bibinfo
  {author} {\bibfnamefont {S.~A.}\ \bibnamefont {Moeller}}, \bibinfo {author}
  {\bibfnamefont {P.}~\bibnamefont {Schindler}}, \bibinfo {author}
  {\bibfnamefont {M.}~\bibnamefont {Sarovar}}, \ and\ \bibinfo {author}
  {\bibfnamefont {H.}~\bibnamefont {Haeffner}},\ }\bibfield  {title} {\enquote
  {\bibinfo {title} {Engineering vibrationally assisted energy transfer in a
  trapped-ion quantum simulator},}\ }\href {\doibase 10.1103/PhysRevX.8.011038}
  {\bibfield  {journal} {\bibinfo  {journal} {Phys. Rev. X}\ }\textbf {\bibinfo
  {volume} {8}},\ \bibinfo {pages} {011038} (\bibinfo {year}
  {2018})}\BibitemShut {NoStop}%
\bibitem [{\citenamefont {Kokail}\ \emph {et~al.}(2019)\citenamefont {Kokail},
  \citenamefont {Maier}, \citenamefont {van Bijnen}, \citenamefont {Brydges},
  \citenamefont {Joshi}, \citenamefont {Jurcevic}, \citenamefont {Muschik},
  \citenamefont {Silvi}, \citenamefont {Blatt}, \citenamefont {Roos} \emph
  {et~al.}}]{kokail2019self}%
  \BibitemOpen
  \bibfield  {author} {\bibinfo {author} {\bibfnamefont {C.}~\bibnamefont
  {Kokail}}, \bibinfo {author} {\bibfnamefont {C.}~\bibnamefont {Maier}},
  \bibinfo {author} {\bibfnamefont {R.}~\bibnamefont {van Bijnen}}, \bibinfo
  {author} {\bibfnamefont {T.}~\bibnamefont {Brydges}}, \bibinfo {author}
  {\bibfnamefont {M.~K.}\ \bibnamefont {Joshi}}, \bibinfo {author}
  {\bibfnamefont {P.}~\bibnamefont {Jurcevic}}, \bibinfo {author}
  {\bibfnamefont {C.~A.}\ \bibnamefont {Muschik}}, \bibinfo {author}
  {\bibfnamefont {P.}~\bibnamefont {Silvi}}, \bibinfo {author} {\bibfnamefont
  {R.}~\bibnamefont {Blatt}}, \bibinfo {author} {\bibfnamefont {C.~F.}\
  \bibnamefont {Roos}},  \emph {et~al.},\ }\bibfield  {title} {\enquote
  {\bibinfo {title} {Self-verifying variational quantum simulation of lattice
  models},}\ }\href {\doibase 10.1038/s41586-019-1177-4} {\bibfield  {journal}
  {\bibinfo  {journal} {Nature}\ }\textbf {\bibinfo {volume} {569}},\ \bibinfo
  {pages} {355} (\bibinfo {year} {2019})}\BibitemShut {NoStop}%
\bibitem [{\citenamefont {Lloyd}(1996)}]{LloydQuantumSim}%
  \BibitemOpen
  \bibfield  {author} {\bibinfo {author} {\bibfnamefont {S.}~\bibnamefont
  {Lloyd}},\ }\bibfield  {title} {\enquote {\bibinfo {title} {Universal quantum
  simulators},}\ }\href {\doibase 10.1126/science.273.5278.1073} {\bibfield
  {journal} {\bibinfo  {journal} {Science}\ }\textbf {\bibinfo {volume}
  {273}},\ \bibinfo {pages} {1073--1078} (\bibinfo {year} {1996})}\BibitemShut
  {NoStop}%
\bibitem [{\citenamefont {Abrams}\ and\ \citenamefont
  {Lloyd}(1997)}]{LloydFermiSim}%
  \BibitemOpen
  \bibfield  {author} {\bibinfo {author} {\bibfnamefont {D.~S.}\ \bibnamefont
  {Abrams}}\ and\ \bibinfo {author} {\bibfnamefont {S.}~\bibnamefont {Lloyd}},\
  }\bibfield  {title} {\enquote {\bibinfo {title} {Simulation of many-body
  fermi systems on a universal quantum computer},}\ }\href {\doibase
  10.1103/PhysRevLett.79.2586} {\bibfield  {journal} {\bibinfo  {journal}
  {Phys. Rev. Lett.}\ }\textbf {\bibinfo {volume} {79}},\ \bibinfo {pages}
  {2586--2589} (\bibinfo {year} {1997})}\BibitemShut {NoStop}%
\bibitem [{\citenamefont {Sornborger}\ and\ \citenamefont
  {Stewart}(1999)}]{SornborgerStewart99}%
  \BibitemOpen
  \bibfield  {author} {\bibinfo {author} {\bibfnamefont {A.~T.}\ \bibnamefont
  {Sornborger}}\ and\ \bibinfo {author} {\bibfnamefont {E.~D.}\ \bibnamefont
  {Stewart}},\ }\bibfield  {title} {\enquote {\bibinfo {title} {Higher-order
  methods for simulations on quantum computers},}\ }\href {\doibase
  10.1103/PhysRevA.60.1956} {\bibfield  {journal} {\bibinfo  {journal} {Phys.
  Rev. A}\ }\textbf {\bibinfo {volume} {60}},\ \bibinfo {pages} {1956--1965}
  (\bibinfo {year} {1999})}\BibitemShut {NoStop}%
\bibitem [{\citenamefont {Campbell}(2019)}]{campbell2019random}%
  \BibitemOpen
  \bibfield  {author} {\bibinfo {author} {\bibfnamefont {Earl}\ \bibnamefont
  {Campbell}},\ }\bibfield  {title} {\enquote {\bibinfo {title} {Random
  compiler for fast hamiltonian simulation},}\ }\href
  {https://journals.aps.org/prl/abstract/10.1103/PhysRevLett.123.070503}
  {\bibfield  {journal} {\bibinfo  {journal} {Physical review letters}\
  }\textbf {\bibinfo {volume} {123}},\ \bibinfo {pages} {070503} (\bibinfo
  {year} {2019})}\BibitemShut {NoStop}%
\bibitem [{\citenamefont {Childs}\ and\ \citenamefont
  {Wiebe}(2012)}]{ChildsLCU}%
  \BibitemOpen
  \bibfield  {author} {\bibinfo {author} {\bibfnamefont {A.~M.}\ \bibnamefont
  {Childs}}\ and\ \bibinfo {author} {\bibfnamefont {N.}~\bibnamefont {Wiebe}},\
  }\bibfield  {title} {\enquote {\bibinfo {title} {Hamiltonian simulation using
  linear combinations of unitary operations},}\ }\href
  {http://dl.acm.org/citation.cfm?id=2481569.2481570} {\bibfield  {journal}
  {\bibinfo  {journal} {Quantum Info. Comput.}\ }\textbf {\bibinfo {volume}
  {12}},\ \bibinfo {pages} {901--924} (\bibinfo {year} {2012})}\BibitemShut
  {NoStop}%
\bibitem [{\citenamefont {Berry}\ \emph {et~al.}(2015)\citenamefont {Berry},
  \citenamefont {Childs}, \citenamefont {Cleve}, \citenamefont {Kothari},\ and\
  \citenamefont {Somma}}]{Somma2015}%
  \BibitemOpen
  \bibfield  {author} {\bibinfo {author} {\bibfnamefont {D.~W.}\ \bibnamefont
  {Berry}}, \bibinfo {author} {\bibfnamefont {A.~M.}\ \bibnamefont {Childs}},
  \bibinfo {author} {\bibfnamefont {R.}~\bibnamefont {Cleve}}, \bibinfo
  {author} {\bibfnamefont {R.}~\bibnamefont {Kothari}}, \ and\ \bibinfo
  {author} {\bibfnamefont {R.~D.}\ \bibnamefont {Somma}},\ }\bibfield  {title}
  {\enquote {\bibinfo {title} {Simulating {H}amiltonian dynamics with a
  truncated {T}aylor series},}\ }\href {\doibase
  10.1103/PhysRevLett.114.090502} {\bibfield  {journal} {\bibinfo  {journal}
  {Phys. Rev. Lett.}\ }\textbf {\bibinfo {volume} {114}},\ \bibinfo {pages}
  {090502} (\bibinfo {year} {2015})}\BibitemShut {NoStop}%
\bibitem [{\citenamefont {Babbush}\ \emph {et~al.}(2019)\citenamefont
  {Babbush}, \citenamefont {Berry},\ and\ \citenamefont {Neven}}]{BabbushLCU}%
  \BibitemOpen
  \bibfield  {author} {\bibinfo {author} {\bibfnamefont {R.}~\bibnamefont
  {Babbush}}, \bibinfo {author} {\bibfnamefont {D.~W.}\ \bibnamefont {Berry}},
  \ and\ \bibinfo {author} {\bibfnamefont {H.}~\bibnamefont {Neven}},\
  }\bibfield  {title} {\enquote {\bibinfo {title} {Quantum simulation of the
  {S}achdev-{Y}e-{K}itaev model by asymmetric qubitization},}\ }\href {\doibase
  10.1103/PhysRevA.99.040301} {\bibfield  {journal} {\bibinfo  {journal} {Phys.
  Rev. A}\ }\textbf {\bibinfo {volume} {99}},\ \bibinfo {pages} {040301}
  (\bibinfo {year} {2019})}\BibitemShut {NoStop}%
\bibitem [{\citenamefont {Feng}\ \emph {et~al.}(2013)\citenamefont {Feng},
  \citenamefont {Lu}, \citenamefont {Hao}, \citenamefont {Zhang},\ and\
  \citenamefont {Long}}]{FengEtAl2013}%
  \BibitemOpen
  \bibfield  {author} {\bibinfo {author} {\bibfnamefont {G.-R.}\ \bibnamefont
  {Feng}}, \bibinfo {author} {\bibfnamefont {Y.}~\bibnamefont {Lu}}, \bibinfo
  {author} {\bibfnamefont {L.}~\bibnamefont {Hao}}, \bibinfo {author}
  {\bibfnamefont {F.-H.}\ \bibnamefont {Zhang}}, \ and\ \bibinfo {author}
  {\bibfnamefont {G.-L.}\ \bibnamefont {Long}},\ }\bibfield  {title} {\enquote
  {\bibinfo {title} {Experimental simulation of quantum tunneling in small
  systems},}\ }\href {\doibase 10.1038/srep02232} {\bibfield  {journal}
  {\bibinfo  {journal} {Sci. Rep.}\ }\textbf {\bibinfo {volume} {3}},\ \bibinfo
  {pages} {2232} (\bibinfo {year} {2013})}\BibitemShut {NoStop}%
\bibitem [{\citenamefont {Peruzzo}\ \emph {et~al.}(2014)\citenamefont
  {Peruzzo}, \citenamefont {McClean}, \citenamefont {Shadbolt}, \citenamefont
  {Yung}, \citenamefont {Zhou}, \citenamefont {Love}, \citenamefont
  {Aspuru-Guzik},\ and\ \citenamefont {O'Brien}}]{peruzzo2014VQE}%
  \BibitemOpen
  \bibfield  {author} {\bibinfo {author} {\bibfnamefont {A.}~\bibnamefont
  {Peruzzo}}, \bibinfo {author} {\bibfnamefont {J.}~\bibnamefont {McClean}},
  \bibinfo {author} {\bibfnamefont {P.}~\bibnamefont {Shadbolt}}, \bibinfo
  {author} {\bibfnamefont {M.-H.}\ \bibnamefont {Yung}}, \bibinfo {author}
  {\bibfnamefont {X.-Q.}\ \bibnamefont {Zhou}}, \bibinfo {author}
  {\bibfnamefont {P.~J.}\ \bibnamefont {Love}}, \bibinfo {author}
  {\bibfnamefont {A.}~\bibnamefont {Aspuru-Guzik}}, \ and\ \bibinfo {author}
  {\bibfnamefont {J.~L.}\ \bibnamefont {O'Brien}},\ }\bibfield  {title}
  {\enquote {\bibinfo {title} {A variational eigenvalue solver on a photonic
  quantum processor},}\ }\href {\doibase 10.1038/ncomms5213} {\bibfield
  {journal} {\bibinfo  {journal} {Nat. Commun.}\ }\textbf {\bibinfo {volume}
  {5}},\ \bibinfo {pages} {4213} (\bibinfo {year} {2014})}\BibitemShut
  {NoStop}%
\bibitem [{\citenamefont {McClean}\ \emph {et~al.}(2016)\citenamefont
  {McClean}, \citenamefont {Romero}, \citenamefont {Babbush},\ and\
  \citenamefont {Aspuru-Guzik}}]{McCleanPerturb}%
  \BibitemOpen
  \bibfield  {author} {\bibinfo {author} {\bibfnamefont {J.~R.}\ \bibnamefont
  {McClean}}, \bibinfo {author} {\bibfnamefont {J.}~\bibnamefont {Romero}},
  \bibinfo {author} {\bibfnamefont {R.}~\bibnamefont {Babbush}}, \ and\
  \bibinfo {author} {\bibfnamefont {A.}~\bibnamefont {Aspuru-Guzik}},\
  }\bibfield  {title} {\enquote {\bibinfo {title} {The theory of variational
  hybrid quantum-classical algorithms},}\ }\href {\doibase
  10.1088/1367-2630/18/2/023023} {\bibfield  {journal} {\bibinfo  {journal}
  {New Journal of Physics}\ }\textbf {\bibinfo {volume} {18}},\ \bibinfo
  {pages} {023023} (\bibinfo {year} {2016})}\BibitemShut {NoStop}%
\bibitem [{\citenamefont {Nakanishi}\ \emph {et~al.}(2019)\citenamefont
  {Nakanishi}, \citenamefont {Mitarai},\ and\ \citenamefont
  {Fujii}}]{nakanishi2018subspace}%
  \BibitemOpen
  \bibfield  {author} {\bibinfo {author} {\bibfnamefont {Ken~M.}\ \bibnamefont
  {Nakanishi}}, \bibinfo {author} {\bibfnamefont {Kosuke}\ \bibnamefont
  {Mitarai}}, \ and\ \bibinfo {author} {\bibfnamefont {Keisuke}\ \bibnamefont
  {Fujii}},\ }\bibfield  {title} {\enquote {\bibinfo {title} {Subspace-search
  variational quantum eigensolver for excited states},}\ }\href {\doibase
  10.1103/PhysRevResearch.1.033062} {\bibfield  {journal} {\bibinfo  {journal}
  {Phys. Rev. Research}\ }\textbf {\bibinfo {volume} {1}},\ \bibinfo {pages}
  {033062} (\bibinfo {year} {2019})}\BibitemShut {NoStop}%
\bibitem [{\citenamefont {Higgott}\ \emph {et~al.}(2019)\citenamefont
  {Higgott}, \citenamefont {Wang},\ and\ \citenamefont
  {Brierley}}]{Higgott2019}%
  \BibitemOpen
  \bibfield  {author} {\bibinfo {author} {\bibfnamefont {O.}~\bibnamefont
  {Higgott}}, \bibinfo {author} {\bibfnamefont {D.}~\bibnamefont {Wang}}, \
  and\ \bibinfo {author} {\bibfnamefont {S.}~\bibnamefont {Brierley}},\
  }\bibfield  {title} {\enquote {\bibinfo {title} {Variational quantum
  computation of excited states},}\ }\href {\doibase 10.22331/q-2019-07-01-156}
  {\bibfield  {journal} {\bibinfo  {journal} {Quantum}\ }\textbf {\bibinfo
  {volume} {3}},\ \bibinfo {pages} {156} (\bibinfo {year} {2019})}\BibitemShut
  {NoStop}%
\bibitem [{\citenamefont {LaRose}\ \emph {et~al.}(2019)\citenamefont {LaRose},
  \citenamefont {Tikku}, \citenamefont {O'Neel-Judy}, \citenamefont {Cincio},\
  and\ \citenamefont {Coles}}]{larose2018}%
  \BibitemOpen
  \bibfield  {author} {\bibinfo {author} {\bibfnamefont {R.}~\bibnamefont
  {LaRose}}, \bibinfo {author} {\bibfnamefont {A.}~\bibnamefont {Tikku}},
  \bibinfo {author} {\bibfnamefont {{\'E}.}~\bibnamefont {O'Neel-Judy}},
  \bibinfo {author} {\bibfnamefont {L.}~\bibnamefont {Cincio}}, \ and\ \bibinfo
  {author} {\bibfnamefont {P.~J.}\ \bibnamefont {Coles}},\ }\bibfield  {title}
  {\enquote {\bibinfo {title} {Variational quantum state diagonalization},}\
  }\href {\doibase 10.1038/s41534-019-0167-6} {\bibfield  {journal} {\bibinfo
  {journal} {npj Quantum Information}\ }\textbf {\bibinfo {volume} {5}},\
  \bibinfo {pages} {57} (\bibinfo {year} {2019})}\BibitemShut {NoStop}%
\bibitem [{\citenamefont {Arrasmith}\ \emph {et~al.}(2019)\citenamefont
  {Arrasmith}, \citenamefont {Cincio}, \citenamefont {Sornborger},
  \citenamefont {Zurek},\ and\ \citenamefont
  {Coles}}]{arrasmith2019variational}%
  \BibitemOpen
  \bibfield  {author} {\bibinfo {author} {\bibfnamefont {A.}~\bibnamefont
  {Arrasmith}}, \bibinfo {author} {\bibfnamefont {L.}~\bibnamefont {Cincio}},
  \bibinfo {author} {\bibfnamefont {A.~T.}\ \bibnamefont {Sornborger}},
  \bibinfo {author} {\bibfnamefont {W.~H.}\ \bibnamefont {Zurek}}, \ and\
  \bibinfo {author} {\bibfnamefont {P.~J.}\ \bibnamefont {Coles}},\ }\bibfield
  {title} {\enquote {\bibinfo {title} {Variational consistent histories as a
  hybrid algorithm for quantum foundations},}\ }\href {\doibase
  10.1038/s41467-019-11417-0} {\bibfield  {journal} {\bibinfo  {journal} {Nat.
  Commun.}\ }\textbf {\bibinfo {volume} {10}},\ \bibinfo {pages} {3438}
  (\bibinfo {year} {2019})}\BibitemShut {NoStop}%
\bibitem [{\citenamefont {{Romero}}\ \emph {et~al.}(2017)\citenamefont
  {{Romero}}, \citenamefont {{Olson}},\ and\ \citenamefont
  {{Aspuru-Guzik}}}]{Romero17}%
  \BibitemOpen
  \bibfield  {author} {\bibinfo {author} {\bibfnamefont {J.}~\bibnamefont
  {{Romero}}}, \bibinfo {author} {\bibfnamefont {J.~P.}\ \bibnamefont
  {{Olson}}}, \ and\ \bibinfo {author} {\bibfnamefont {A.}~\bibnamefont
  {{Aspuru-Guzik}}},\ }\bibfield  {title} {\enquote {\bibinfo {title} {{Quantum
  autoencoders for efficient compression of quantum data}},}\ }\href {\doibase
  10.1088/2058-9565/aa8072} {\bibfield  {journal} {\bibinfo  {journal} {Quantum
  Science and Technology}\ }\textbf {\bibinfo {volume} {2}},\ \bibinfo {pages}
  {045001} (\bibinfo {year} {2017})}\BibitemShut {NoStop}%
\bibitem [{\citenamefont {Anschuetz}\ \emph {et~al.}(2019)\citenamefont
  {Anschuetz}, \citenamefont {Olson}, \citenamefont {Aspuru-Guzik},\ and\
  \citenamefont {Cao}}]{anschuetz2019variational}%
  \BibitemOpen
  \bibfield  {author} {\bibinfo {author} {\bibfnamefont {E.}~\bibnamefont
  {Anschuetz}}, \bibinfo {author} {\bibfnamefont {J.}~\bibnamefont {Olson}},
  \bibinfo {author} {\bibfnamefont {A.}~\bibnamefont {Aspuru-Guzik}}, \ and\
  \bibinfo {author} {\bibfnamefont {Y.}~\bibnamefont {Cao}},\ }\bibfield
  {title} {\enquote {\bibinfo {title} {Variational quantum factoring},}\ }in\
  \href {https://link.springer.com/chapter/10.1007/978-3-030-14082-3_7} {\emph
  {\bibinfo {booktitle} {International Workshop on Quantum Technology and
  Optimization Problems}}}\ (\bibinfo {organization} {Springer},\ \bibinfo
  {year} {2019})\ pp.\ \bibinfo {pages} {74--85}\BibitemShut {NoStop}%
\bibitem [{\citenamefont {Li}\ and\ \citenamefont
  {Benjamin}(2017)}]{li2017efficient}%
  \BibitemOpen
  \bibfield  {author} {\bibinfo {author} {\bibfnamefont {Y.}~\bibnamefont
  {Li}}\ and\ \bibinfo {author} {\bibfnamefont {S.~C.}\ \bibnamefont
  {Benjamin}},\ }\bibfield  {title} {\enquote {\bibinfo {title} {Efficient
  variational quantum simulator incorporating active error minimization},}\
  }\href {\doibase 10.1103/PhysRevX.7.021050} {\bibfield  {journal} {\bibinfo
  {journal} {Phys. Rev. X}\ }\textbf {\bibinfo {volume} {7}},\ \bibinfo {pages}
  {021050} (\bibinfo {year} {2017})}\BibitemShut {NoStop}%
\bibitem [{\citenamefont {Endo}\ \emph {et~al.}(2020)\citenamefont {Endo},
  \citenamefont {Sun}, \citenamefont {Li}, \citenamefont {Benjamin},\ and\
  \citenamefont {Yuan}}]{BenjaminVarSim2}%
  \BibitemOpen
  \bibfield  {author} {\bibinfo {author} {\bibfnamefont {Suguru}\ \bibnamefont
  {Endo}}, \bibinfo {author} {\bibfnamefont {Jinzhao}\ \bibnamefont {Sun}},
  \bibinfo {author} {\bibfnamefont {Ying}\ \bibnamefont {Li}}, \bibinfo
  {author} {\bibfnamefont {Simon~C}\ \bibnamefont {Benjamin}}, \ and\ \bibinfo
  {author} {\bibfnamefont {Xiao}\ \bibnamefont {Yuan}},\ }\bibfield  {title}
  {\enquote {\bibinfo {title} {Variational quantum simulation of general
  processes},}\ }\href {http://doi.org/10.1103/physrevlett.125.010501}
  {\bibfield  {journal} {\bibinfo  {journal} {Physical Review Letters}\
  }\textbf {\bibinfo {volume} {125}},\ \bibinfo {pages} {010501} (\bibinfo
  {year} {2020})}\BibitemShut {NoStop}%
\bibitem [{\citenamefont {Yuan}\ \emph {et~al.}(2019)\citenamefont {Yuan},
  \citenamefont {Endo}, \citenamefont {Zhao}, \citenamefont {Li},\ and\
  \citenamefont {Benjamin}}]{BenjaminVarSim1}%
  \BibitemOpen
  \bibfield  {author} {\bibinfo {author} {\bibfnamefont {Xiao}\ \bibnamefont
  {Yuan}}, \bibinfo {author} {\bibfnamefont {Suguru}\ \bibnamefont {Endo}},
  \bibinfo {author} {\bibfnamefont {Qi}~\bibnamefont {Zhao}}, \bibinfo {author}
  {\bibfnamefont {Ying}\ \bibnamefont {Li}}, \ and\ \bibinfo {author}
  {\bibfnamefont {Simon~C.}\ \bibnamefont {Benjamin}},\ }\bibfield  {title}
  {\enquote {\bibinfo {title} {Theory of variational quantum simulation},}\
  }\href {\doibase 10.22331/q-2019-10-07-191} {\bibfield  {journal} {\bibinfo
  {journal} {{Quantum}}\ }\textbf {\bibinfo {volume} {3}},\ \bibinfo {pages}
  {191} (\bibinfo {year} {2019})}\BibitemShut {NoStop}%
\bibitem [{\citenamefont {Heya}\ \emph {et~al.}()\citenamefont {Heya},
  \citenamefont {Nakanishi}, \citenamefont {Mitarai},\ and\ \citenamefont
  {Fujii}}]{HeyaEtAl2019}%
  \BibitemOpen
  \bibfield  {author} {\bibinfo {author} {\bibfnamefont {K.}~\bibnamefont
  {Heya}}, \bibinfo {author} {\bibfnamefont {K.}~\bibnamefont {Nakanishi}},
  \bibinfo {author} {\bibfnamefont {K.}~\bibnamefont {Mitarai}}, \ and\
  \bibinfo {author} {\bibfnamefont {K.}~\bibnamefont {Fujii}},\ }\bibfield
  {title} {\enquote {\bibinfo {title} {Subspace variational quantum
  simulator},}\ }\href@noop {} {\ }\Eprint {http://arxiv.org/abs/1904.08566}
  {arXiv:1904.08566 [quant-ph]} \BibitemShut {NoStop}%
\bibitem [{\citenamefont {Atia}\ and\ \citenamefont
  {Aharonov}(2017)}]{atia2017fast}%
  \BibitemOpen
  \bibfield  {author} {\bibinfo {author} {\bibfnamefont {Y.}~\bibnamefont
  {Atia}}\ and\ \bibinfo {author} {\bibfnamefont {D.}~\bibnamefont
  {Aharonov}},\ }\bibfield  {title} {\enquote {\bibinfo {title}
  {Fast-forwarding of {H}amiltonians and exponentially precise measurements},}\
  }\href {\doibase 10.1038/s41467-017-01637-7} {\bibfield  {journal} {\bibinfo
  {journal} {Nat. {C}ommun.}\ }\textbf {\bibinfo {volume} {8}},\ \bibinfo
  {pages} {1572} (\bibinfo {year} {2017})}\BibitemShut {NoStop}%
\bibitem [{\citenamefont {Berry}\ \emph {et~al.}(2007)\citenamefont {Berry},
  \citenamefont {Ahokas}, \citenamefont {Cleve},\ and\ \citenamefont
  {Sanders}}]{BerryEtAl2007}%
  \BibitemOpen
  \bibfield  {author} {\bibinfo {author} {\bibfnamefont {D.~W.}\ \bibnamefont
  {Berry}}, \bibinfo {author} {\bibfnamefont {G.}~\bibnamefont {Ahokas}},
  \bibinfo {author} {\bibfnamefont {R.}~\bibnamefont {Cleve}}, \ and\ \bibinfo
  {author} {\bibfnamefont {B.~C.}\ \bibnamefont {Sanders}},\ }\bibfield
  {title} {\enquote {\bibinfo {title} {Efficient quantum algorithms for
  simulating sparse {H}amiltonians},}\ }\href {\doibase
  10.1007/s00220-006-0150-x} {\bibfield  {journal} {\bibinfo  {journal}
  {Commun. Math. Phys.}\ }\textbf {\bibinfo {volume} {270}},\ \bibinfo {pages}
  {359--371} (\bibinfo {year} {2007})}\BibitemShut {NoStop}%
\bibitem [{\citenamefont {Loke}\ and\ \citenamefont
  {Wang}(2017)}]{loke2017efficient}%
  \BibitemOpen
  \bibfield  {author} {\bibinfo {author} {\bibfnamefont {T.}~\bibnamefont
  {Loke}}\ and\ \bibinfo {author} {\bibfnamefont {J.~B.}\ \bibnamefont
  {Wang}},\ }\bibfield  {title} {\enquote {\bibinfo {title} {Efficient quantum
  circuits for continuous-time quantum walks on composite graphs},}\ }\href
  {\doibase 10.1088/1751-8121/aa53a9} {\bibfield  {journal} {\bibinfo
  {journal} {Journal of Physics A: Mathematical and Theoretical}\ }\textbf
  {\bibinfo {volume} {50}},\ \bibinfo {pages} {055303} (\bibinfo {year}
  {2017})}\BibitemShut {NoStop}%
\bibitem [{\citenamefont {Verstraete}\ \emph {et~al.}(2009)\citenamefont
  {Verstraete}, \citenamefont {Cirac},\ and\ \citenamefont
  {Latorre}}]{PhysRevA.79.032316}%
  \BibitemOpen
  \bibfield  {author} {\bibinfo {author} {\bibfnamefont {F.}~\bibnamefont
  {Verstraete}}, \bibinfo {author} {\bibfnamefont {J.~I.}\ \bibnamefont
  {Cirac}}, \ and\ \bibinfo {author} {\bibfnamefont {J.~I.}\ \bibnamefont
  {Latorre}},\ }\bibfield  {title} {\enquote {\bibinfo {title} {Quantum
  circuits for strongly correlated quantum systems},}\ }\href {\doibase
  10.1103/PhysRevA.79.032316} {\bibfield  {journal} {\bibinfo  {journal} {Phys.
  Rev. A}\ }\textbf {\bibinfo {volume} {79}},\ \bibinfo {pages} {032316}
  (\bibinfo {year} {2009})}\BibitemShut {NoStop}%
\bibitem [{\citenamefont {Cervera-Lierta}(2018)}]{CerveraLierta2018}%
  \BibitemOpen
  \bibfield  {author} {\bibinfo {author} {\bibfnamefont {A.}~\bibnamefont
  {Cervera-Lierta}},\ }\bibfield  {title} {\enquote {\bibinfo {title} {Exact
  {I}sing model simulation on a quantum computer},}\ }\href {\doibase
  10.22331/q-2018-12-21-114} {\bibfield  {journal} {\bibinfo  {journal}
  {Quantum}\ }\textbf {\bibinfo {volume} {2}},\ \bibinfo {pages} {114}
  (\bibinfo {year} {2018})}\BibitemShut {NoStop}%
\bibitem [{\citenamefont {{Novo}}\ \emph {et~al.}(2019)\citenamefont {{Novo}},
  \citenamefont {{Bermejo-Vega}},\ and\ \citenamefont
  {{Garc{\'\i}a-Patr{\'o}n}}}]{novo2019quantum}%
  \BibitemOpen
  \bibfield  {author} {\bibinfo {author} {\bibfnamefont {Leonardo}\
  \bibnamefont {{Novo}}}, \bibinfo {author} {\bibfnamefont {Juani}\
  \bibnamefont {{Bermejo-Vega}}}, \ and\ \bibinfo {author} {\bibfnamefont
  {Ra{\'u}l}\ \bibnamefont {{Garc{\'\i}a-Patr{\'o}n}}},\ }\bibfield  {title}
  {\enquote {\bibinfo {title} {{Quantum advantage from energy measurements of
  many-body quantum systems}},}\ }\href {https://arxiv.org/abs/1912.06608}
  {\bibfield  {journal} {\bibinfo  {journal} {arXiv:1912.06608}\ } (\bibinfo
  {year} {2019})}\BibitemShut {NoStop}%
\bibitem [{\citenamefont {Childs}\ and\ \citenamefont
  {Kothari}(2010)}]{Childs:2010:LSN:2011373.2011380}%
  \BibitemOpen
  \bibfield  {author} {\bibinfo {author} {\bibfnamefont {A.~M.}\ \bibnamefont
  {Childs}}\ and\ \bibinfo {author} {\bibfnamefont {R.}~\bibnamefont
  {Kothari}},\ }\bibfield  {title} {\enquote {\bibinfo {title} {Limitations on
  the simulation of non-sparse {H}amiltonians},}\ }\href
  {http://dl.acm.org/citation.cfm?id=2011373.2011380} {\bibfield  {journal}
  {\bibinfo  {journal} {Quantum Info. Comput.}\ }\textbf {\bibinfo {volume}
  {10}},\ \bibinfo {pages} {669--684} (\bibinfo {year} {2010})}\BibitemShut
  {NoStop}%
\bibitem [{\citenamefont {Khatri}\ \emph {et~al.}(2019)\citenamefont {Khatri},
  \citenamefont {LaRose}, \citenamefont {Poremba}, \citenamefont {Cincio},
  \citenamefont {Sornborger},\ and\ \citenamefont {Coles}}]{khatrietal2019}%
  \BibitemOpen
  \bibfield  {author} {\bibinfo {author} {\bibfnamefont {S.}~\bibnamefont
  {Khatri}}, \bibinfo {author} {\bibfnamefont {R.}~\bibnamefont {LaRose}},
  \bibinfo {author} {\bibfnamefont {A.}~\bibnamefont {Poremba}}, \bibinfo
  {author} {\bibfnamefont {L.}~\bibnamefont {Cincio}}, \bibinfo {author}
  {\bibfnamefont {A.}~\bibnamefont {Sornborger}}, \ and\ \bibinfo {author}
  {\bibfnamefont {P.}~\bibnamefont {Coles}},\ }\bibfield  {title} {\enquote
  {\bibinfo {title} {Quantum-assisted quantum compilation},}\ }\href {\doibase
  10.22331/q-2019-05-13-140} {\bibfield  {journal} {\bibinfo  {journal}
  {Quantum}\ }\textbf {\bibinfo {volume} {3}},\ \bibinfo {pages} {140}
  (\bibinfo {year} {2019})}\BibitemShut {NoStop}%
\bibitem [{\citenamefont {Sharma}\ \emph {et~al.}(2020)\citenamefont {Sharma},
  \citenamefont {Khatri}, \citenamefont {Cerezo},\ and\ \citenamefont
  {Coles}}]{SharmaEtAl2019}%
  \BibitemOpen
  \bibfield  {author} {\bibinfo {author} {\bibfnamefont {Kunal}\ \bibnamefont
  {Sharma}}, \bibinfo {author} {\bibfnamefont {Sumeet}\ \bibnamefont {Khatri}},
  \bibinfo {author} {\bibfnamefont {M}~\bibnamefont {Cerezo}}, \ and\ \bibinfo
  {author} {\bibfnamefont {Patrick~J}\ \bibnamefont {Coles}},\ }\bibfield
  {title} {\enquote {\bibinfo {title} {Noise resilience of variational quantum
  compiling},}\ }\href {\doibase 10.1088/1367-2630/ab784c} {\bibfield
  {journal} {\bibinfo  {journal} {New Journal of Physics}\ }\textbf {\bibinfo
  {volume} {22}},\ \bibinfo {pages} {043006} (\bibinfo {year}
  {2020})}\BibitemShut {NoStop}%
\bibitem [{\citenamefont {{Cerezo}}\ \emph {et~al.}(2020)\citenamefont
  {{Cerezo}}, \citenamefont {{Sone}}, \citenamefont {{Volkoff}}, \citenamefont
  {{Cincio}},\ and\ \citenamefont {{Coles}}}]{BarrenPlateausCerezo}%
  \BibitemOpen
  \bibfield  {author} {\bibinfo {author} {\bibfnamefont {M.}~\bibnamefont
  {{Cerezo}}}, \bibinfo {author} {\bibfnamefont {Akira}\ \bibnamefont
  {{Sone}}}, \bibinfo {author} {\bibfnamefont {Tyler}\ \bibnamefont
  {{Volkoff}}}, \bibinfo {author} {\bibfnamefont {Lukasz}\ \bibnamefont
  {{Cincio}}}, \ and\ \bibinfo {author} {\bibfnamefont {Patrick~J.}\
  \bibnamefont {{Coles}}},\ }\bibfield  {title} {\enquote {\bibinfo {title}
  {{Cost-Function-Dependent Barren Plateaus in Shallow Quantum Neural
  Networks}},}\ }\href {https://arxiv.org/abs/2001.00550} {\bibfield  {journal}
  {\bibinfo  {journal} {arXiv:2001.00550}\ } (\bibinfo {year}
  {2020})}\BibitemShut {NoStop}%
\bibitem [{\citenamefont {Verteletskyi}\ \emph {et~al.}(2020)\citenamefont
  {Verteletskyi}, \citenamefont {Yen},\ and\ \citenamefont
  {Izmaylov}}]{VQEMeasurementsVerteletskyi}%
  \BibitemOpen
  \bibfield  {author} {\bibinfo {author} {\bibfnamefont {Vladyslav}\
  \bibnamefont {Verteletskyi}}, \bibinfo {author} {\bibfnamefont {Tzu-Ching}\
  \bibnamefont {Yen}}, \ and\ \bibinfo {author} {\bibfnamefont {Artur~F.}\
  \bibnamefont {Izmaylov}},\ }\bibfield  {title} {\enquote {\bibinfo {title}
  {Measurement optimization in the variational quantum eigensolver using a
  minimum clique cover},}\ }\href {http://dx.doi.org/10.1063/1.5141458}
  {\bibfield  {journal} {\bibinfo  {journal} {The Journal of Chemical Physics}\
  }\textbf {\bibinfo {volume} {152}},\ \bibinfo {pages} {124114} (\bibinfo
  {year} {2020})}\BibitemShut {NoStop}%
\bibitem [{\citenamefont {{Gokhale}}\ \emph {et~al.}(2019)\citenamefont
  {{Gokhale}}, \citenamefont {{Angiuli}}, \citenamefont {{Ding}}, \citenamefont
  {{Gui}}, \citenamefont {{Tomesh}}, \citenamefont {{Suchara}}, \citenamefont
  {{Martonosi}},\ and\ \citenamefont {{Chong}}}]{VQEMeasurementsGokhale}%
  \BibitemOpen
  \bibfield  {author} {\bibinfo {author} {\bibfnamefont {Pranav}\ \bibnamefont
  {{Gokhale}}}, \bibinfo {author} {\bibfnamefont {Olivia}\ \bibnamefont
  {{Angiuli}}}, \bibinfo {author} {\bibfnamefont {Yongshan}\ \bibnamefont
  {{Ding}}}, \bibinfo {author} {\bibfnamefont {Kaiwen}\ \bibnamefont {{Gui}}},
  \bibinfo {author} {\bibfnamefont {Teague}\ \bibnamefont {{Tomesh}}}, \bibinfo
  {author} {\bibfnamefont {Martin}\ \bibnamefont {{Suchara}}}, \bibinfo
  {author} {\bibfnamefont {Margaret}\ \bibnamefont {{Martonosi}}}, \ and\
  \bibinfo {author} {\bibfnamefont {Frederic~T.}\ \bibnamefont {{Chong}}},\
  }\bibfield  {title} {\enquote {\bibinfo {title} {{Minimizing State
  Preparations in Variational Quantum Eigensolver by Partitioning into
  Commuting Families}},}\ }\href {https://arxiv.org/abs/1907.13623} {\bibfield
  {journal} {\bibinfo  {journal} {arXiv:1907.13623}\ } (\bibinfo {year}
  {2019})}\BibitemShut {NoStop}%
\bibitem [{\citenamefont {{Crawford}}\ \emph {et~al.}(2019)\citenamefont
  {{Crawford}}, \citenamefont {{van Straaten}}, \citenamefont {{Wang}},
  \citenamefont {{Parks}}, \citenamefont {{Campbell}},\ and\ \citenamefont
  {{Brierley}}}]{VQEMeasurementsCrawford}%
  \BibitemOpen
  \bibfield  {author} {\bibinfo {author} {\bibfnamefont {Ophelia}\ \bibnamefont
  {{Crawford}}}, \bibinfo {author} {\bibfnamefont {Barnaby}\ \bibnamefont {{van
  Straaten}}}, \bibinfo {author} {\bibfnamefont {Daochen}\ \bibnamefont
  {{Wang}}}, \bibinfo {author} {\bibfnamefont {Thomas}\ \bibnamefont
  {{Parks}}}, \bibinfo {author} {\bibfnamefont {Earl}\ \bibnamefont
  {{Campbell}}}, \ and\ \bibinfo {author} {\bibfnamefont {Stephen}\
  \bibnamefont {{Brierley}}},\ }\bibfield  {title} {\enquote {\bibinfo {title}
  {{Efficient quantum measurement of Pauli operators in the presence of finite
  sampling error}},}\ }\href {https://arxiv.org/abs/1908.06942} {\bibfield
  {journal} {\bibinfo  {journal} {arXiv:1908.06942}\ } (\bibinfo {year}
  {2019})}\BibitemShut {NoStop}%
\bibitem [{\citenamefont {Wecker}\ \emph {et~al.}(2014)\citenamefont {Wecker},
  \citenamefont {Bauer}, \citenamefont {Clark}, \citenamefont {Hastings},\ and\
  \citenamefont {Troyer}}]{Wecker2014}%
  \BibitemOpen
  \bibfield  {author} {\bibinfo {author} {\bibfnamefont {D.}~\bibnamefont
  {Wecker}}, \bibinfo {author} {\bibfnamefont {B.}~\bibnamefont {Bauer}},
  \bibinfo {author} {\bibfnamefont {B.~K.}\ \bibnamefont {Clark}}, \bibinfo
  {author} {\bibfnamefont {M.~B.}\ \bibnamefont {Hastings}}, \ and\ \bibinfo
  {author} {\bibfnamefont {M.}~\bibnamefont {Troyer}},\ }\bibfield  {title}
  {\enquote {\bibinfo {title} {Gate-count estimates for performing quantum
  chemistry on small quantum computers},}\ }\href {\doibase
  10.1103/PhysRevA.90.022305} {\bibfield  {journal} {\bibinfo  {journal} {Phys.
  Rev. A}\ }\textbf {\bibinfo {volume} {90}},\ \bibinfo {pages} {022305}
  (\bibinfo {year} {2014})}\BibitemShut {NoStop}%
\bibitem [{\citenamefont {Poulin}\ \emph {et~al.}(2015)\citenamefont {Poulin},
  \citenamefont {Hastings}, \citenamefont {Wecker}, \citenamefont {Wiebe},
  \citenamefont {Doberty},\ and\ \citenamefont {Troyer}}]{Poulin2015}%
  \BibitemOpen
  \bibfield  {author} {\bibinfo {author} {\bibfnamefont {D.}~\bibnamefont
  {Poulin}}, \bibinfo {author} {\bibfnamefont {M.~B.}\ \bibnamefont
  {Hastings}}, \bibinfo {author} {\bibfnamefont {D.}~\bibnamefont {Wecker}},
  \bibinfo {author} {\bibfnamefont {N.}~\bibnamefont {Wiebe}}, \bibinfo
  {author} {\bibfnamefont {A.~C.}\ \bibnamefont {Doberty}}, \ and\ \bibinfo
  {author} {\bibfnamefont {M.}~\bibnamefont {Troyer}},\ }\bibfield  {title}
  {\enquote {\bibinfo {title} {The trotter step size required for accurate
  quantum simulation of quantum chemistry},}\ }\href
  {http://dl.acm.org/citation.cfm?id=2871401.2871402} {\bibfield  {journal}
  {\bibinfo  {journal} {Quantum Info. Comput.}\ }\textbf {\bibinfo {volume}
  {15}},\ \bibinfo {pages} {361--384} (\bibinfo {year} {2015})}\BibitemShut
  {NoStop}%
\bibitem [{\citenamefont {Lieb}\ \emph {et~al.}(1961)\citenamefont {Lieb},
  \citenamefont {Schultz},\ and\ \citenamefont {Mattis}}]{lieb1961two}%
  \BibitemOpen
  \bibfield  {author} {\bibinfo {author} {\bibfnamefont {Elliott}\ \bibnamefont
  {Lieb}}, \bibinfo {author} {\bibfnamefont {Theodore}\ \bibnamefont
  {Schultz}}, \ and\ \bibinfo {author} {\bibfnamefont {Daniel}\ \bibnamefont
  {Mattis}},\ }\bibfield  {title} {\enquote {\bibinfo {title} {Two soluble
  models of an antiferromagnetic chain},}\ }\href {\doibase
  10.1016/0003-4916(61)90115-4} {\bibfield  {journal} {\bibinfo  {journal}
  {Annals of Physics}\ }\textbf {\bibinfo {volume} {16}},\ \bibinfo {pages}
  {407--466} (\bibinfo {year} {1961})}\BibitemShut {NoStop}%
\bibitem [{\citenamefont {{Trout}}\ \emph {et~al.}(2018)\citenamefont
  {{Trout}}, \citenamefont {{Li}}, \citenamefont {{Guti{\'e}rrez}},
  \citenamefont {{Wu}}, \citenamefont {{Wang}}, \citenamefont {{Duan}},\ and\
  \citenamefont {{Brown}}}]{trout18}%
  \BibitemOpen
  \bibfield  {author} {\bibinfo {author} {\bibfnamefont {Colin~J.}\
  \bibnamefont {{Trout}}}, \bibinfo {author} {\bibfnamefont {Muyuan}\
  \bibnamefont {{Li}}}, \bibinfo {author} {\bibfnamefont {Mauricio}\
  \bibnamefont {{Guti{\'e}rrez}}}, \bibinfo {author} {\bibfnamefont {Yukai}\
  \bibnamefont {{Wu}}}, \bibinfo {author} {\bibfnamefont {Sheng-Tao}\
  \bibnamefont {{Wang}}}, \bibinfo {author} {\bibfnamefont {Luming}\
  \bibnamefont {{Duan}}}, \ and\ \bibinfo {author} {\bibfnamefont {Kenneth~R.}\
  \bibnamefont {{Brown}}},\ }\bibfield  {title} {\enquote {\bibinfo {title}
  {{Simulating the performance of a distance-3 surface code in a linear ion
  trap}},}\ }\href {\doibase 10.1088/1367-2630/aab341} {\bibfield  {journal}
  {\bibinfo  {journal} {New Journal of Physics}\ }\textbf {\bibinfo {volume}
  {20}},\ \bibinfo {eid} {043038} (\bibinfo {year} {2018})}\BibitemShut
  {NoStop}%
\bibitem [{\citenamefont {Linke}\ \emph {et~al.}(2018)\citenamefont {Linke},
  \citenamefont {Johri}, \citenamefont {Figgatt}, \citenamefont {Landsman},
  \citenamefont {Matsuura},\ and\ \citenamefont {Monroe}}]{linke2017measuring}%
  \BibitemOpen
  \bibfield  {author} {\bibinfo {author} {\bibfnamefont {N.~M.}\ \bibnamefont
  {Linke}}, \bibinfo {author} {\bibfnamefont {S.}~\bibnamefont {Johri}},
  \bibinfo {author} {\bibfnamefont {C.}~\bibnamefont {Figgatt}}, \bibinfo
  {author} {\bibfnamefont {K.~A.}\ \bibnamefont {Landsman}}, \bibinfo {author}
  {\bibfnamefont {A.~Y.}\ \bibnamefont {Matsuura}}, \ and\ \bibinfo {author}
  {\bibfnamefont {C.}~\bibnamefont {Monroe}},\ }\bibfield  {title} {\enquote
  {\bibinfo {title} {Measuring the {R}\'enyi entropy of a~two-site
  {F}ermi-{H}ubbard model on a~trapped ion quantum computer},}\ }\href
  {\doibase 10.1103/physreva.98.052334} {\bibfield  {journal} {\bibinfo
  {journal} {Phys. Rev. A}\ }\textbf {\bibinfo {volume} {98}},\ \bibinfo
  {pages} {052334} (\bibinfo {year} {2018})}\BibitemShut {NoStop}%
\bibitem [{\citenamefont {{Nielsen}}()}]{EntFidelityNielsen}%
  \BibitemOpen
  \bibfield  {author} {\bibinfo {author} {\bibfnamefont {M.~A.}\ \bibnamefont
  {{Nielsen}}},\ }\bibfield  {title} {\enquote {\bibinfo {title} {{The
  entanglement fidelity and quantum error correction}},}\ }\href@noop {} {\
  }\Eprint {http://arxiv.org/abs/quant-ph/9606012} {arXiv:quant-ph/9606012}
  \BibitemShut {NoStop}%
\bibitem [{\citenamefont {{Kitaev}}(1995)}]{KitaevQPE}%
  \BibitemOpen
  \bibfield  {author} {\bibinfo {author} {\bibfnamefont {A.~Yu.}\ \bibnamefont
  {{Kitaev}}},\ }\bibfield  {title} {\enquote {\bibinfo {title} {{Quantum
  measurements and the Abelian Stabilizer Problem}},}\ }\href
  {https://arxiv.org/abs/quant-ph/9511026} {\bibfield  {journal} {\bibinfo
  {journal} {arXiv:quant-ph/9511026}\ } (\bibinfo {year} {1995})}\BibitemShut
  {NoStop}%
\bibitem [{\citenamefont {Somma}\ \emph {et~al.}(2002)\citenamefont {Somma},
  \citenamefont {Ortiz}, \citenamefont {Gubernatis}, \citenamefont {Knill},\
  and\ \citenamefont {Laflamme}}]{Somma2002TS}%
  \BibitemOpen
  \bibfield  {author} {\bibinfo {author} {\bibfnamefont {R.}~\bibnamefont
  {Somma}}, \bibinfo {author} {\bibfnamefont {G.}~\bibnamefont {Ortiz}},
  \bibinfo {author} {\bibfnamefont {J.~E.}\ \bibnamefont {Gubernatis}},
  \bibinfo {author} {\bibfnamefont {E.}~\bibnamefont {Knill}}, \ and\ \bibinfo
  {author} {\bibfnamefont {R.}~\bibnamefont {Laflamme}},\ }\bibfield  {title}
  {\enquote {\bibinfo {title} {Simulating physical phenomena by quantum
  networks},}\ }\href {\doibase 10.1103/PhysRevA.65.042323} {\bibfield
  {journal} {\bibinfo  {journal} {Phys. Rev. A}\ }\textbf {\bibinfo {volume}
  {65}},\ \bibinfo {pages} {042323} (\bibinfo {year} {2002})}\BibitemShut
  {NoStop}%
\bibitem [{\citenamefont {Somma}(2019)}]{Somma2019TS}%
  \BibitemOpen
  \bibfield  {author} {\bibinfo {author} {\bibfnamefont {Rolando~D}\
  \bibnamefont {Somma}},\ }\bibfield  {title} {\enquote {\bibinfo {title}
  {Quantum eigenvalue estimation via time series analysis},}\ }\href {\doibase
  10.1088/1367-2630/ab5c60} {\bibfield  {journal} {\bibinfo  {journal} {New
  Journal of Physics}\ }\textbf {\bibinfo {volume} {21}},\ \bibinfo {pages}
  {123025} (\bibinfo {year} {2019})}\BibitemShut {NoStop}%
\bibitem [{\citenamefont {Hoffman}\ and\ \citenamefont
  {Wielandt}(1953)}]{hoffman1953}%
  \BibitemOpen
  \bibfield  {author} {\bibinfo {author} {\bibfnamefont {A.~J.}\ \bibnamefont
  {Hoffman}}\ and\ \bibinfo {author} {\bibfnamefont {H.~W.}\ \bibnamefont
  {Wielandt}},\ }\bibfield  {title} {\enquote {\bibinfo {title} {The variation
  of the spectrum of a normal matrix},}\ }\href {\doibase
  10.1215/S0012-7094-53-02004-3} {\bibfield  {journal} {\bibinfo  {journal}
  {Duke Math. J.}\ }\textbf {\bibinfo {volume} {20}},\ \bibinfo {pages}
  {37--39} (\bibinfo {year} {1953})}\BibitemShut {NoStop}%
\bibitem [{\citenamefont {Welch}\ \emph {et~al.}(2014)\citenamefont {Welch},
  \citenamefont {Greenbaum}, \citenamefont {Mostame},\ and\ \citenamefont
  {Aspuru-Guzik}}]{Welch2014}%
  \BibitemOpen
  \bibfield  {author} {\bibinfo {author} {\bibfnamefont {J.}~\bibnamefont
  {Welch}}, \bibinfo {author} {\bibfnamefont {D.}~\bibnamefont {Greenbaum}},
  \bibinfo {author} {\bibfnamefont {S.}~\bibnamefont {Mostame}}, \ and\
  \bibinfo {author} {\bibfnamefont {A.}~\bibnamefont {Aspuru-Guzik}},\
  }\bibfield  {title} {\enquote {\bibinfo {title} {Efficient quantum circuits
  for diagonal unitaries without ancillas},}\ }\href {\doibase
  10.1088/1367-2630/16/3/033040} {\bibfield  {journal} {\bibinfo  {journal}
  {New Journal of Physics}\ }\textbf {\bibinfo {volume} {16}},\ \bibinfo
  {pages} {033040} (\bibinfo {year} {2014})}\BibitemShut {NoStop}%
\bibitem [{\citenamefont {Grant}\ \emph {et~al.}(2019)\citenamefont {Grant},
  \citenamefont {Wossnig}, \citenamefont {Ostaszewski},\ and\ \citenamefont
  {Benedetti}}]{grant2019initialization}%
  \BibitemOpen
  \bibfield  {author} {\bibinfo {author} {\bibfnamefont {Edward}\ \bibnamefont
  {Grant}}, \bibinfo {author} {\bibfnamefont {Leonard}\ \bibnamefont
  {Wossnig}}, \bibinfo {author} {\bibfnamefont {Mateusz}\ \bibnamefont
  {Ostaszewski}}, \ and\ \bibinfo {author} {\bibfnamefont {Marcello}\
  \bibnamefont {Benedetti}},\ }\bibfield  {title} {\enquote {\bibinfo {title}
  {An initialization strategy for addressing barren plateaus in parametrized
  quantum circuits},}\ }\href {\doibase 10.22331/q-2019-12-09-214} {\bibfield
  {journal} {\bibinfo  {journal} {{Quantum}}\ }\textbf {\bibinfo {volume}
  {3}},\ \bibinfo {pages} {214} (\bibinfo {year} {2019})}\BibitemShut {NoStop}%
\bibitem [{\citenamefont {{Garcia-Saez}}\ and\ \citenamefont
  {{Latorre}}()}]{GarciaSaezPerturb}%
  \BibitemOpen
  \bibfield  {author} {\bibinfo {author} {\bibfnamefont {A.}~\bibnamefont
  {{Garcia-Saez}}}\ and\ \bibinfo {author} {\bibfnamefont {J.~I.}\ \bibnamefont
  {{Latorre}}},\ }\bibfield  {title} {\enquote {\bibinfo {title} {{Addressing
  hard classical problems with Adiabatically Assisted Variational Quantum
  Eigensolvers}},}\ }\href@noop {} {\ }\Eprint
  {http://arxiv.org/abs/1806.02287} {arXiv:1806.02287 [quant-ph]} \BibitemShut
  {NoStop}%
\bibitem [{\citenamefont {Harrow}\ and\ \citenamefont
  {Napp}()}]{harrow2019low}%
  \BibitemOpen
  \bibfield  {author} {\bibinfo {author} {\bibfnamefont {A.}~\bibnamefont
  {Harrow}}\ and\ \bibinfo {author} {\bibfnamefont {J.}~\bibnamefont {Napp}},\
  }\bibfield  {title} {\enquote {\bibinfo {title} {Low-depth gradient
  measurements can improve convergence in variational hybrid quantum-classical
  algorithms},}\ }\href@noop {} {\ }\Eprint {http://arxiv.org/abs/1901.05374}
  {arXiv:1901.05374 [quant-ph]} \BibitemShut {NoStop}%
\bibitem [{\citenamefont {K{\"{u}}bler}\ \emph {et~al.}(2020)\citenamefont
  {K{\"{u}}bler}, \citenamefont {Arrasmith}, \citenamefont {Cincio},\ and\
  \citenamefont {Coles}}]{kubler2019adaptive}%
  \BibitemOpen
  \bibfield  {author} {\bibinfo {author} {\bibfnamefont {Jonas~M.}\
  \bibnamefont {K{\"{u}}bler}}, \bibinfo {author} {\bibfnamefont {Andrew}\
  \bibnamefont {Arrasmith}}, \bibinfo {author} {\bibfnamefont {Lukasz}\
  \bibnamefont {Cincio}}, \ and\ \bibinfo {author} {\bibfnamefont {Patrick~J.}\
  \bibnamefont {Coles}},\ }\bibfield  {title} {\enquote {\bibinfo {title} {An
  {A}daptive {O}ptimizer for {M}easurement-{F}rugal {V}ariational
  {A}lgorithms},}\ }\href {\doibase 10.22331/q-2020-05-11-263} {\bibfield
  {journal} {\bibinfo  {journal} {{Quantum}}\ }\textbf {\bibinfo {volume}
  {4}},\ \bibinfo {pages} {263} (\bibinfo {year} {2020})}\BibitemShut {NoStop}%
\bibitem [{\citenamefont {Mitarai}\ \emph {et~al.}(2018)\citenamefont
  {Mitarai}, \citenamefont {Negoro}, \citenamefont {Kitagawa},\ and\
  \citenamefont {Fujii}}]{mitarai2018quantum}%
  \BibitemOpen
  \bibfield  {author} {\bibinfo {author} {\bibfnamefont {K.}~\bibnamefont
  {Mitarai}}, \bibinfo {author} {\bibfnamefont {M.}~\bibnamefont {Negoro}},
  \bibinfo {author} {\bibfnamefont {M.}~\bibnamefont {Kitagawa}}, \ and\
  \bibinfo {author} {\bibfnamefont {K.}~\bibnamefont {Fujii}},\ }\bibfield
  {title} {\enquote {\bibinfo {title} {Quantum circuit learning},}\ }\href
  {\doibase 10.1103/PhysRevA.98.032309} {\bibfield  {journal} {\bibinfo
  {journal} {Phys. Rev. A}\ }\textbf {\bibinfo {volume} {98}},\ \bibinfo {eid}
  {032309} (\bibinfo {year} {2018})}\BibitemShut {NoStop}%
\bibitem [{\citenamefont {Nielsen}(2002)}]{NielsonPLA2002}%
  \BibitemOpen
  \bibfield  {author} {\bibinfo {author} {\bibfnamefont {M.~A.}\ \bibnamefont
  {Nielsen}},\ }\bibfield  {title} {\enquote {\bibinfo {title} {A simple
  formula for the average gate fidelity of a quantum dynamical operation},}\
  }\href {\doibase 10.1016/S0375-9601(02)01272-0} {\bibfield  {journal}
  {\bibinfo  {journal} {Physics Letters A}\ }\textbf {\bibinfo {volume}
  {303}},\ \bibinfo {pages} {249 -- 252} (\bibinfo {year} {2002})}\BibitemShut
  {NoStop}%
\bibitem [{\citenamefont {Horodecki}\ and\ \citenamefont
  {Horodecki}(1999)}]{HorodeckiPRA1999}%
  \BibitemOpen
  \bibfield  {author} {\bibinfo {author} {\bibfnamefont {M.}~\bibnamefont
  {Horodecki}}\ and\ \bibinfo {author} {\bibfnamefont {R.}~\bibnamefont
  {Horodecki}, \bibfnamefont {P.and~Horodecki}},\ }\bibfield  {title} {\enquote
  {\bibinfo {title} {General teleportation channel, singlet fraction, and
  quasidistillation},}\ }\href {\doibase 10.1103/PhysRevA.60.1888} {\bibfield
  {journal} {\bibinfo  {journal} {Phys. Rev. A}\ }\textbf {\bibinfo {volume}
  {60}},\ \bibinfo {pages} {1888--1898} (\bibinfo {year} {1999})}\BibitemShut
  {NoStop}%
\bibitem [{\citenamefont {Balasubramanian}\ \emph {et~al.}(2020)\citenamefont
  {Balasubramanian}, \citenamefont {DeCross}, \citenamefont {Kar},\ and\
  \citenamefont {Parrikar}}]{balasubramanian2020quantum}%
  \BibitemOpen
  \bibfield  {author} {\bibinfo {author} {\bibfnamefont {Vijay}\ \bibnamefont
  {Balasubramanian}}, \bibinfo {author} {\bibfnamefont {Matthew}\ \bibnamefont
  {DeCross}}, \bibinfo {author} {\bibfnamefont {Arjun}\ \bibnamefont {Kar}}, \
  and\ \bibinfo {author} {\bibfnamefont {Onkar}\ \bibnamefont {Parrikar}},\
  }\bibfield  {title} {\enquote {\bibinfo {title} {Quantum complexity of time
  evolution with chaotic hamiltonians},}\ }\href {\doibase
  10.1007/JHEP01(2020)134} {\bibfield  {journal} {\bibinfo  {journal} {Journal
  of High Energy Physics}\ }\textbf {\bibinfo {volume} {2020}},\ \bibinfo
  {pages} {134} (\bibinfo {year} {2020})}\BibitemShut {NoStop}%
\end{thebibliography}
%


\clearpage
\onecolumngrid


\setcounter{equation}{0}
\setcounter{figure}{0}
\setcounter{table}{0}
\setcounter{page}{1}
\setcounter{section}{0}
\makeatletter
\renewcommand{\thesection}{\color{black}{Supplementary Note}~\arabic{section}}
\renewcommand{\theequation}{S\arabic{equation}}
\renewcommand{\thefigure}{S\arabic{figure}}
\renewcommand{\bibnumfmt}[1]{[S#1]}

\begin{center}
\Large{ Supplementary Material for \\ ``Variational Fast Forwarding for Quantum Simulation Beyond the Coherence Time''}
\end{center}


\section{Cost Function}\label{ap:Cost}

Here we elaborate on our cost function. As noted in Sec.~\ref{sec:cost}, our proposed cost function is the $C_{\mbox {\tiny LHST}}$ function introduced in Ref.~\cite{khatrietal2019}, defined by 
\begin{equation}
    C_{\mbox {\tiny LHST}}(U, V)= 1- \frac{1}{n}\sum_{j=1}^n F_e^{(j)}\,.
\end{equation}

Let us now precisely define the entanglement fidelities $F_e^{(j)}$. Consider a $2n$-qubit system composed of the $n$-qubit subsystems $A$ and $B$. Let $A_j$ ($B_j$) denote the $j$-th qubit of system $A$ ($B$). Let $\ket{\Phi^+} = (\ket{00} + \ket{11})/\sqrt{2}$ denote the standard 2-qubit Bell state. Then we can write $F_e^{(j)}$ as
\begin{equation}\label{eq-LHST_prob}
 F_e^{(j)} :=  \Tr\left(\dya{\Phi^+}_{A_jB_j}(\mathcal{E}_{j}\otimes\mathcal{I}_{B_j})(\dya{\Phi^+}_{A_jB_j})\right).
\end{equation}
Here, $\EC_j$ is a quantum channel that acts on qubit $A_j$ as follows. For an arbitrary state $\rho_{A_j}$,
\begin{equation}\label{eq-HST_local_channel}
    \EC_j(\rho_{A_j}) = \Tr_{\overline{A}_j}\left(U V\ad \left(\rho_{A_j}\otimes\frac{\id_{\overline{A}_j}}{2^{n-1}}\right)V U\ad \right),
\end{equation}
where $\overline{A}_j$ is the set of all qubits in $A$ except for $A_j$.

The fact that $C_{\mbox {\tiny LHST}}$ is a faithful cost function was shown in \cite{khatrietal2019} by relating $C_{\mbox {\tiny LHST}}$ to another cost function whose properties are more transparent. Namely, consider the function
    \begin{align}\label{eq:HSTdefintion}
    C_{\mbox {\tiny HST}}(U,V) = 1-\frac{1}{d^2}|\Tr(U V\ad)|^2.
    \end{align}
Since $\Tr(U V\ad)$ is the Hilbert-Schmidt inner product, it is clear that $C_{\mbox {\tiny HST}}(U,V) = 0$ if and only if $V = U$ (up to global phase). Reference~\cite{khatrietal2019} then proved the following relation: 
\begin{equation}\label{eq:LHSTvsHSTbound}
    C_{\mbox {\tiny LHST}}(U,V) \leq C_{\mbox {\tiny HST}}(U,V) \leq n \, C_{\mbox {\tiny LHST}}(U,V) \,.
\end{equation}
This implies that $C_{\mbox {\tiny LHST}}$ vanishes under precisely the same conditions as $C_{\mbox {\tiny HST}}$, and hence that $C_{\mbox {\tiny LHST}}$ is faithful.

While the $C_{\mbox {\tiny HST}}$ function has direct operational meaning in terms of the inner product between $U$ and $V$, we propose to use $C_{\mbox {\tiny LHST}}$ instead of $C_{\mbox {\tiny HST}}$ for the following reason. In Ref.~\cite{khatrietal2019}, it was argued that there are simple examples (e.g., when $U$ and $V$ are tensor-product unitaries) for which the gradient of $C_{\mbox {\tiny HST}}$ vanishes exponentially with $n$, while the gradient of $C_{\mbox {\tiny LHST}}$ is independent of $n$. This implies that $C_{\mbox {\tiny LHST}}$ is easier to train than $C_{\mbox {\tiny HST}}$ for large $n$, and indeed Ref.~\cite{khatrietal2019} confirmed this with numerical implementations for increasing values of $n$. Hence $C_{\mbox {\tiny LHST}}$ has better scaling properties, while it also inherits the operational meaning of $C_{\mbox {\tiny HST}}$ via the relation in \eqref{eq:LHSTvsHSTbound}.

\section{Simulation errors}\label{sec:SimulationErrors}

\subsection{Linear scaling in $N$}

Here we provide a proof of \eqref{eqnLemmaNsteps}, which is restated in the following lemma.
\begin{lemma}\label{lemma1}
	Suppose $U_1$ and $U_2$ are two unitary matrices, then for any positive integer $N$ we have:
	\begin{equation}\label{eq:unitaryproductbound}
	||U_1^{N} - U_2^{N}||_{p}\leq N ||U_1 - U_2||_p \; 
	\end{equation}
    where $|| ... ||_p$ denotes the Schatten $p$-norm. 
\end{lemma}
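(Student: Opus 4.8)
The plan is to reduce the $N$-step bound to the single-step quantity $\|U_1 - U_2\|_p$ by means of a telescoping identity combined with the unitary invariance of the Schatten norms. First I would expand the difference of powers as a telescoping sum,
\[
U_1^{N} - U_2^{N} = \sum_{k=0}^{N-1} U_1^{\,N-1-k}\,(U_1 - U_2)\,U_2^{\,k},
\]
which one checks by observing that the $k$-th summand equals $U_1^{\,N-k}U_2^{\,k} - U_1^{\,N-1-k}U_2^{\,k+1}$, so that consecutive terms cancel and the sum collapses, the surviving boundary terms being $U_1^{N}$ (from $k=0$) and $-U_2^{N}$ (from $k=N-1$).

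Second, I would apply the triangle inequality for the Schatten $p$-norm to this sum, obtaining
\[
\|U_1^{N} - U_2^{N}\|_p \leq \sum_{k=0}^{N-1} \|U_1^{\,N-1-k}(U_1 - U_2)U_2^{\,k}\|_p.
\]
The remaining ingredient is that each factor $U_1^{\,N-1-k}$ and $U_2^{\,k}$ is itself unitary (being a product of unitaries), and that the Schatten $p$-norm is invariant under multiplication by a unitary on either side, since such multiplication does not change the singular values; equivalently $\|VAW\|_p = \|A\|_p$ for unitaries $V,W$, which follows directly from the definition $\|M\|_p = (\sum_j m_j^p)^{1/p}$ stated in the main text. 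Hence every summand reduces to $\|U_1 - U_2\|_p$, and summing the $N$ identical terms yields $\|U_1^{N} - U_2^{N}\|_p \leq N\|U_1 - U_2\|_p$, as claimed.

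I do not anticipate a serious obstacle here, as the argument is elementary; the only points requiring care are the index bookkeeping that confirms the telescoping identity and the explicit invocation of unitary invariance. I would additionally remark that the linear factor $N$ cannot be improved in general without further assumptions: when $U_1$ and $U_2$ commute and share an eigenbasis, the per-eigenvalue phase differences can add coherently, saturating the bound.
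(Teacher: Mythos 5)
Your proof is correct and follows essentially the same route as the paper's: a telescoping decomposition of $U_1^N - U_2^N$, the triangle inequality, and unitary invariance of the Schatten $p$-norm to reduce each of the $N$ summands to $\|U_1 - U_2\|_p$. Your explicit indexed sum and the closing remark on sharpness of the factor $N$ are welcome additions, but the substance is identical to the paper's argument.
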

\begin{proof}
	We expand the norm of the difference of products by adding and subtracting convenient terms so that
	\begin{align}
	 	\| U_1^N-U_2^N \|_{p} = \| U_1^{N} - U_1^{N-1}U_2 + U_1^{N-1}U_2  - U_1^{N-2}U_2^2 ... + U_1 U_2^{N-1} - U_2^N \|_{p}   \; . 
	\end{align}
 	From the triangle inequality it follows that:
	\begin{align}
	\| U_1^N-U_2^N \|_{p} \leq \,  \|U_1^{N-1} (U_1-U_2)\|_p \, +  \| U_1^{N-2}(U_1-U_2)U_2 \|_p \,  ... + \| (U_1 - U_2)U_2^{N-1} \|_p \; .
	\end{align}
	 There are a total of $N$ terms in the above summation and as the Schatten norms are unitarily invariant, meaning that $\| U A V \| = \| A \|$ for any unitary matrices $U$ and $V$, each of these $N$ terms is equal to $\| U_1 - U_2 \|$. Thus we obtain the required result.
\end{proof}

\subsection{Scaling of cost function with $N$}\label{ap:CertBounds}

Here we reformulate Lemma~\ref{lemma1} in terms of the VFF cost function. Since the latter can be efficiently estimated on a quantum computer, one can view this reformulation as a certifiable version of Lemma~\ref{lemma1}. This reformulation is derived by specializing Lemma~\ref{lemma1} to the case of $p=2$, i.e. the Hilbert-Schmidt norm.

We first remark that for our purposes the phases of the simulated unitaries are always global phases and therefore unphysical. (This is true since we intend for $V$ to be implemented directly on an $n$-qubit system rather than on a subsytem as a subcomponent of a larger simulation.) We therefore introduce the phase-independent quantity $\tilde{\epsilon}_2(U_1, U_2)$,
\begin{align}
\label{eqnTriangleInequality4}
\tilde{\epsilon}_2(U_1, U_2):= \min_{\phi} \| U_1 - \exp(i \phi) U_2 \|_{2} \,, 
\end{align}
which depends only on the Hilbert-Schmidt inner product, 
\begin{equation}
   \begin{aligned}
   \tilde{\epsilon}_2(U_1, U_2) &= \min_{\phi} \sqrt{2d - 2 \Re(\Tr( U_1^\dagger  e^{i\phi} U_2))} \\
   &= \sqrt{2d - 2 |\Tr (U_1^\dagger U_2 )|} \,.
\end{aligned} 
\end{equation}
Hence this quantity can be related to $C_{\mbox{\tiny HST}}$ defined in \eqref{eq:HSTdefintion} by
\begin{equation}
   \begin{aligned}
   \tilde{\epsilon}_2(U_1, U_2) = \sqrt{2d \left(1 - \sqrt{1-C_{\mbox{\tiny HST}}(U_1, U_2)}\right)} \; .
\end{aligned}\label{eq:POTQtoHST} 
\end{equation}

Now, let us specialize Lemma~\ref{lemma1} to $p=2$ and minimize over all global phase factors applied to the unitary $U_2$. Using \eqref{eq:POTQtoHST}, this results in:
\begin{equation}\label{eq:fastforwardCHST}
    1-\sqrt{1-C_{\mbox{\tiny HST}}(U_1^N, U_2^N)} \leq N^2 \left(  1-\sqrt{1-C_{\mbox{\tiny HST}}(U_1, U_2)} \right) \,. 
\end{equation}
Given that $C_{\mbox {\tiny LHST}}$ is bounded by $C_{\mbox {\tiny HST}}$ via~\eqref{eq:LHSTvsHSTbound}, the fast-forwarded $C_{\mbox {\tiny LHST}}$ can similarly be bounded as 
\begin{equation}\label{eq:fastforwardCLHST}
    \small{1-\sqrt{1-C_{\mbox{\tiny LHST}}(U^N, V^N)} \leq N^2 \left(  1-\sqrt{1-n \, C_{\mbox{\tiny LHST}}(U, V)} \right)} \ , 
\end{equation}
where we assume $n \, C_{\mbox{\tiny LHST}} \leq 1$ and we chose $U_1 = U$ and $U_2 = V$. Equation~\eqref{eq:fastforwardCLHST} is the exact version of \eqref{eqn:subquadratic} in the main text. Specializing to the case where the cost function $C_{\mbox{\tiny LHST}}$ is small, \eqref{eq:fastforwardCLHST} becomes \eqref{eqn:subquadratic}, i.e.,
\begin{equation}
    	C_{\mbox {\tiny LHST}}(U^N, V^N) \lessapprox  n \, N^2 \, C_{\mbox {\tiny LHST}}(U, V) \,.
\end{equation}

\subsection{An operational termination condition}\label{ap:TermCondition}

The VFF cost function is operationally meaningful by virtue of its relation to the average-case diagonalization error. Specifically, it can be shown~\cite{NielsonPLA2002, HorodeckiPRA1999} that
\begin{equation}\label{eq:CHSTtoAverageFid}
    C_{\mbox {\tiny HST}}(U_1, U_2) = \frac{d+1}{d}(1 -  \overline{F}(U_1,U_2))
\end{equation}
where
\begin{equation}
     \overline{F}(U_1, U_2) := \int_\psi | \ip{\psi(U_1)}{\psi(U_2)} |^2 d\psi \; 
\end{equation}
is the average fidelity over the Haar distribution. Therefore, from Eq.~\eqref{eq:LHSTvsHSTbound}, $C_{\mbox {\tiny LHST}}$ upper bounds the average fidelity as follows 
\begin{equation}\label{eq:CLHSTtoAverageFid}
    C_{\mbox {\tiny LHST}}(U_1, U_2) \geq \frac{d+1}{n d}(1 -  \overline{F}(U_1,U_2)) \; .
\end{equation}
This relation enables us to bound the average simulation error and hence provide a termination condition for VFF.

To derive a termination condition, we start from the bound on the total simulation error, Eq.~\eqref{eqnTriangleInequality2}, written in terms of the Hilbert-Schmidt norm ($p=2$) and take the minimum of the total simulation error and diagonalization errors over global phases applies to $V$ to remove their arbitrary phase dependence:
\begin{align}
  \tilde{\epsilon}_2\left(e^{-iHT}, V(\alv, T ) \right) \leq   N  \left( \epsilon_{2}^{\mbox{\tiny TS}}(\Delta t)  +
 \tilde{\epsilon}_2\left(U(\Delta t ), V(\alv, \Delta t )\right) \right) \,.
\end{align}
We can then rewrite this expression in terms of the average fidelity of the simulation,
\begin{equation}
        \overline{F}(T) := \overline{F}(e^{-iH T}, V(\alv, T )) \,,
\end{equation}
and the cost function $C_{\mbox{\tiny HST}}$ using~\eqref{eq:POTQtoHST} and~\eqref{eq:CHSTtoAverageFid}:
\begin{equation}
\begin{aligned}
G(T) \leq  \frac{N}{\sqrt{2d}} \epsilon_{2}^{\mbox{\tiny TS}}(\Delta t) + N \sqrt{1-\sqrt{1-C_{\mbox{\tiny HST}}^{\mbox{\tiny VFF}}(\Delta t)  }} \,. 
\end{aligned}
\end{equation}
Here, to simply the expression, we have defined
\begin{equation}
    G(T) := \sqrt{1-\sqrt{1-\frac{d+1}{d}(1-\overline{F}(T))}}\,,
\end{equation}
and $C_{\mbox{\tiny HST}}^{\mbox{\tiny VFF}}(\Delta t):= C_{\mbox{\tiny HST}}(U(\Delta t ), V(\alv, \Delta t ))$. As the operator norm is typically used for Trotter error analysis in the quantum simulation literature~\cite{Wecker2014, Poulin2015}, we rewrite $\epsilon_{2}^{\mbox{\tiny TS}}(\Delta t)$ in terms of the operator norm using equivalence relation $|| X ||_2 \leq \sqrt{d}|| X ||_\infty$, 
\begin{equation}\label{eq:AvergeFidCostFunctTrotBound1}
 G(T) \leq  \frac{N}{\sqrt{2}} \epsilon_{\infty}^{\mbox{\tiny TS}}(\Delta t) + N \sqrt{1-\sqrt{1-C_{\mbox{\tiny HST}}^{\mbox{\tiny VFF}}(\Delta t)  }} \; .
\end{equation}
Finally, $C_{\mbox{\tiny HST}}$ is upper bounded by $n C_{\mbox{\tiny LHST}}$ and therefore
\begin{equation}\label{eq:AvergeFidCostFunctTrotBound2}
 G(T) \leq  \frac{N}{\sqrt{2}} \epsilon_{\infty}^{\mbox{\tiny TS}}(\Delta t) + N \sqrt{1-\sqrt{1-n C_{\mbox{\tiny LHST}}^{\mbox{\tiny VFF}}(\Delta t)  }} \,, 
\end{equation}
assuming that $n C_{\mbox{\tiny LHST}}^{\mbox{\tiny VFF}}(\Delta t)\leq 1$. Re-arranging terms and denoting by $\epsilon(\Delta t) := \frac{1}{\sqrt{2}}\epsilon_{\infty}^{\mbox{\tiny TS}}(\Delta t) + \sqrt{1-\sqrt{1-nC_{\mbox{\tiny LHST}}^{\mbox{\tiny VFF}}(\Delta t) }}$ we get that whenever $\epsilon(\Delta t) \leq 1/N$ then the average fidelity is bounded by
\begin{align}
 \overline{F}(T) &\geq \frac{1}{d+1} + \frac{d}{d+1} \left(1-N^{2}\epsilon(\Delta t)^2\right)^{2} \\
 & = 1-\frac{N^2\epsilon(\Delta t)^2 d}{d+1}\left(2-N^2\epsilon(\Delta t)^2\right)\\
 &\geq 1 - \frac{N^2 d}{d+1} \, 2 \epsilon(\Delta t)^2 \,.\label{eqn:FinalBoundAvgF} 
\end{align}

One can get a more compact (but weaker) lower bound on the average fidelity by observing that \begin{align}
    \sqrt{1-\sqrt{1-nC_{\mbox{\tiny LHST}}^{\mbox{\tiny VFF}}(\Delta t)}}&\leq \sqrt{n C_{\mbox{\tiny LHST}}^{\mbox{\tiny VFF}}(\Delta t)} \;
\end{align} 
which holds whenever $n C_{\mbox{\tiny LHST}}^{\mbox{\tiny VFF}}(\Delta t)\leq 1$. Therefore we get an upper bound on $\epsilon(\Delta t)\leq \frac{1}{\sqrt{2}}\epsilon^{\mbox{\tiny TS}}_{\infty}(\Delta t) + \sqrt{nC_{\mbox{\tiny LHST}}^{\mbox{\tiny VFF}}(\Delta t)}$ so that \eqref{eqn:FinalBoundAvgF} becomes:
\begin{align}
    \overline{F}(T)&\geq 1- \frac{N^2\,d}{d+1}\left(\epsilon^{\mbox{\tiny TS}}_{\infty}(\Delta t) + \sqrt{2n C_{\mbox{\tiny LHST}}^{\mbox{\tiny VFF}}(\Delta t)}\right)^2\,.
\end{align}

After a successful optimization procedure, $n C_{\mbox{\tiny LHST}}^{\mbox{\tiny VFF}}(\Delta t)$ is expected to be small and Eq.~\eqref{eqn:FinalBoundAvgF} reduces to
\begin{equation}\label{eq:AvergeFidCostFunctTrotBoundApprox}
\begin{aligned}
\overline{F}(T) \gtrapprox 1 - \frac{N^2 d}{d+1}  \left(\epsilon_{\infty}^{\mbox{\tiny TS}}(\Delta t)  + \sqrt{ n C_{\mbox{\tiny LHST}}^{\mbox{\tiny VFF}}(\Delta t)} \right)^2   \; .
\end{aligned}
\end{equation}

Given a fixed initial Trotter error and for a target fast-forwarding time and simulation fidelity, Eq.~\eqref{eq:AvergeFidCostFunctTrotBound2}, and Eq.~\eqref{eq:AvergeFidCostFunctTrotBoundApprox} its simplified approximate variant, prescribe the cost function which must be surpassed before terminating the optimization loop.
Specifically, the termination condition is $C_{\mbox{\tiny LHST}}^{\mbox{\tiny VFF}} \leq C_{\mbox{\tiny Threshold}}$, with
\begin{equation}\label{eq:CostThreshold}
\begin{aligned}
C_{\mbox{\tiny Threshold}} &= \frac{1}{n} \left( 1 - \left( 1 - \left( \frac{1}{N} \sqrt{1-\sqrt{1-\frac{d+1}{d}(1-\overline{F}(T))}} - \frac{1}{\sqrt{2}} \epsilon_{\infty}^{\mbox{\tiny TS}}(\Delta t) \right)^2 \right)^2\right) \\ 
& \approx \frac{1}{n} \left(\frac{1}{N}\sqrt{\frac{d+1}{ d}(1-\overline{F}(T))} -\epsilon_{\infty}^{\mbox{\tiny TS}}(\Delta t) \right)^2 \; .  
\end{aligned}
\end{equation}

\section{Estimation of energy eigenvalues}\label{ap:Eigen}

Suppose $\{\lambda_1^{\rm{exact}},...,\lambda_d^{\rm{exact}}\}$ are the energies corresponding to the target unitary $U$, and $\{\lambda_1,...\lambda_d\}$ estimates extracted (using for example a time-series analysis) from the approximate unitary $V$ obtained from the VFF algorithm. Then, using the Hoffmann-Wielandt theorem, there is an ordering of the approximate energies (i.e a permutation $\sigma$) so that
\begin{equation}
    \sum_{i} |e^{i\lambda_{i}^{\rm{exact}}}- e^{i\lambda_{\sigma(i)}}|^2 \leq ||U-V||_{2}^2.
\end{equation}
Expanding the above and introducing the additional (fixed) arbitrary global phase $\phi_0$ we get
\begin{align}
    &\sum_{i} 2(1- \cos{(\lambda_i^{\rm{exact}} - \lambda_{\sigma(i)} + \phi_0)})\leq ||U-e^{i\phi_0}V||_{2}^{2}, 
\end{align}
where $\phi_0$ achieves $\min_{\phi} ||U-e^{i\phi}V||_{2} = \sqrt{2d(1-\sqrt{1-C_{HST}(U,V)})}$, as explained in \ref{ap:CertBounds}. Further, the upper bound can be related to the local cost function $C_{\mbox{\tiny LHST}}^{\mbox{\tiny VFF}}$.
\begin{align}
    &\sum_{i} 2(1- \cos{(\lambda_i^{\rm{exact}} - \lambda_{\sigma(i)} + \phi_0)})\leq \sqrt{2d(1-\sqrt{1-nC_{\mbox{\tiny LHST}}^{\mbox{\tiny VFF}}(U,V)})} , 
\end{align}
For small values of the cost function $C_{\mbox{\tiny LHST}}^{\mbox{\tiny VFF}}$ the upper bound will be small resulting in $\lambda_i^{\rm{exact}}\approx \lambda_{\sigma(i)}$.

\section{Cost function gradient derivation}\label{ap:GradientCalc}

Here we provide the derivation of the partial derivative of $C_{\mbox{\tiny LHST}}^{\mbox{\tiny VFF}}$ with respect to $\theta_k$ in \eqref{eq:CostEigenvectorGradient}. To emphasize the dependence on $\thv$, we write 
\begin{equation}
\begin{aligned}
    C_{\mbox{\tiny LHST}}^{\mbox{\tiny VFF}} = 1 - F(\vec{\theta})\,,\qquad \text{with} \qquad 
    F(\vec{\theta}) = \Tr[X (U^* \otimes W D W^\dagger) \dya{\Phi^+} (U^T \otimes W D^\dagger W^\dagger)], 
\end{aligned}
\end{equation}
where we have 
$X = \frac{1}{n} \sum_{j =1}^n \dya{\Phi^+}_{A_j, B_j} \otimes \id_{\overline{A}_j, \overline{B}_j}$ with $\ket{\Phi^+} = \frac{1}{\sqrt{2}} \left(\ket{00} + \ket{11}\right)$.
Taking the partial derivative of the cost function with respect to an angle $\theta_k$ gives
\begin{equation}\label{eq:DifF}
\begin{aligned} 
     \frac{\partial C_{\mbox{\tiny LHST}}^{\mbox{\tiny VFF}}}{\partial \theta_k} = &- \Tr \left[X \left(U^* \otimes \frac{\partial W}{\partial \theta_k} D W^\dagger \right) \dya{\Phi^+} \left(U^T \otimes W D^\dagger W^\dagger \right) \right] \\ 
     &- \Tr \left[X \left(U^* \otimes W D W^\dagger \right) \dya{\Phi^+} \left(U^T \otimes W D^\dagger \frac{\partial W^\dagger}{\partial \theta_k} \right) \right] \\ 
     &- \Tr \left[X \left(U^* \otimes W D \frac{\partial W^\dagger}{\partial \theta_k}\right) \dya{\Phi^+} \left(U^T \otimes W D^\dagger W^\dagger \right) \right] \\
     &- \Tr \left[X \left(U^* \otimes W D W^\dagger \right) \dya{\Phi^+} \left(U^T \otimes \frac{\partial W}{\partial \theta_k} D^\dagger W^\dagger \right) \right] . 
\end{aligned}
\end{equation}
The eigenvector operator, $W$, consists of products of Pauli rotations and can be decomposed as 
\begin{equation}
W = W_L  \exp \left(- \frac{ i \theta_k \sigma_k }{2} \right)  W_{R'} \equiv W_L W_R
\end{equation}
where the operators $W_L$ and $W_{R'}$ consist of all Pauli rotations to the left and right of the $\sigma_k$ rotation respectively and we have defined $W_R = \exp(- i \theta_k \sigma_k/2) W_{R'} $ for convenience. It follows that the differential of $W$ with respect to $\theta_k$ takes the form 
\begin{equation}
\frac{\partial W}{\partial \theta_k} = - \frac{1}{2} i W_L \sigma_k W_{R} \; , 
\end{equation}
which on substituting into Eq.~\eqref{eq:DifF} gives  \begin{equation}\label{eq:SubDifF}
\begin{aligned}
     \frac{\partial C_{\mbox{\tiny LHST}}^{\mbox{\tiny VFF}}}{\partial \theta_k} 
     = \frac{i}{2} \bigg( \Tr \left[X \left(U^* \otimes W_L \right) [ \sigma_k  , \rho_1]\left(U^T \otimes W_L^\dagger  \right) \right]  
- \Tr \left[X \left(U^* \otimes W D  W_R^\dagger\right) [\sigma_k,\rho_2]  \left(U^T \otimes W_R D^\dagger W^\dagger \right) \right] \bigg) \; ,
\end{aligned}
\end{equation}
where we have defined
\begin{align}
    &\rho_1 = W_{R} D W^\dagger \dya{\Phi^+} W D^\dagger W_{R}^\dagger \ \ \ \text{and} \\  &\rho_2 = W_L^\dagger \dya{\Phi^+} W_L \ . 
\end{align}
Eq.~\eqref{eq:CostEigenvectorGradient} is now obtained directly from Eq.~\eqref{eq:SubDifF} via the following identity, which holds for any state $\rho$,
\begin{align}
    i  [ \sigma_k  , \rho] &= e^{ i  \sigma_k \pi /4 } \rho e^{- i  \sigma_k \pi /4 } - e^{ -i  \sigma_k \pi /4 } \rho e^{ i  \sigma_k \pi /4 } \,.
\end{align}

{\color{black}
\section{Fast forwarding of Hamiltonians: From Asymptotic Time Complexity to the NISQ Era}\label{ap:FastForwarding}
The aim of this section is to give a brief summary of known complexity results and open questions regarding Hamiltonians that allow fast forwarding and to discuss their implications for near-term algorithms like VFF. 

A Hamiltonian $H$ is said to be fast forwardable if it can be simulated for time $T$ with computational resources much smaller than $T$. The work of Atia and Aharonov \cite{atia2017fast} establishes an equivalence between fast forwarding of Hamiltonians and performing efficient energy measurements with high precision. More specifically~\cite{atia2017fast}, and as sketched in Fig.~\ref{fig:FastForwarding}, a family of (normalised) Hamiltonians $H(n)$ that acts on an increasing number of qubits $n$ is fast-forwardable if for all $T \leq t(n)$ there is a circuit $U$ with polynomial size $d(n)$ such that $||e^{-iHT} - U||_{\infty}\leq \mathcal{O}(1/\mathrm{poly}(n))$ and $t(n)$ grows asymptotically faster than $d(n)$. In particular, whenever time grows exponentially in the number of qubits $t(n) = \mathcal{O}(2^{\Omega(n)})$, Hamiltonians satisfying this definition are said to be exponentially fast-forwardable. Specific classes of Hamiltonians have been shown to be exponentially fast-forwardable \cite{atia2017fast} including commuting Hamiltonians, quadratic fermionic systems and a familly of Hamiltonians related to Shor's algorithm. Recent geometric results \cite{balasubramanian2020quantum} suggest that the Sachdev-Ye-Kitaev model of chaotic systems can also be fast-forwarded. 

More generally, a connection between diagonalization of Hamiltonians and exponential fast forwarding has been established. Following \cite{atia2017fast}, a Hamiltonian $H$ is said to be \emph{quantum diagonalizable} if there exists a polynomial sized diagonalizing unitary $V$, and polynomial sized diagonal unitary $D$ such that $VHV^{\dagger} = D$. Any such Hamiltonian admits exponential fast forwarding. However, it remains an open question to investigate if the set of fast-forwardable Hamiltonians is larger than the set of quantum diagonalizable Hamiltonians \cite{novo2019quantum}. These types of questions are similar to proving tighter lower bounds for the circuit complexity of large time simulations restricted to specific classes of Hamiltonians. Further, we emphasize that fast-forwardable Hamiltonians can generate highly non-classical behaviour. For example, in \cite{novo2019quantum} a diagonalizable Hamiltonian for which $V$ is constructed out of IQP circuits gives a quantum advantage proposal for the task of energy sampling.

The no-fast-forwarding theorem has been established in several Hamiltonian models of computation. For the unknown Hamiltonian model  \cite{atia2017fast}, the no-fast-forwarding theorem states that there is no generic method to exponentially fast forward an arbitrary 2-sparse row computable Hamiltonian. A similar result for the query model (where access to the Hamiltonian is achieved by querying its matrix elements) implies that for every time $T$ there exists a (normalised, row-computable sparse) Hamiltonian whose simulation with finite precision for time $T$ requires at least $\Omega(T)$ queries.  These results establish the existence of Hamiltonians for which no generic fast forwarding occurs, without a constructive way to build non-fast forwardable Hamiltonians. As a consequence, large classes of Hamiltonians that allow universality will necessarily include families for which there is no generic way of fast-forwarding. As an example, nearest-neighbour Heisenberg interactions on a lattice give rise to a universal class of Hamiltonians.

\medskip

\paragraph*{What are the implications of these results to VFF?}
It is highly important to emphasize that the notion of fast forwarding technically refers to a \emph{family} of Hamiltonians acting on an increasing number of qubits $n$, so it can be misleading to refer to fast forwarding of a Hamiltonian with a fixed number of qubits. Thus the question at hand is whether/how the no-fast-forwarding theorems restrict the use of VFF to simulate families of Hamiltonians. 

To answer this we first note that asymptotic complexity results cannot be directly applied to finite size $n$ and finite depth $d$ experiments. Therefore, as it is the small and intermediary scale behaviour that determines the feasibility and practical applicability of VFF experiments for the foreseeable future, it follows that the questions of whether a family of Hamiltonians is asymptotically fast-forwardable and whether VFF can be used to perform simulation beyond the coherence time are largely independent. As sketched in Fig.~\ref{fig:FastForwarding} it is possible to conceive of families of Hamiltonians that are asymptotically non-fast-forwardable and yet, for a regime of finite $n$, allow for simulation beyond the coherence time. Conversely, there may exists families of Hamiltonians which are fast-forwardable but do not permit simulation beyond the coherence time, due to an unfavourable scaling at low $n$. In this sense, the complexity-theoretic no-fast-forwarding results are not expected to place restrictions on the practical use of VFF for NISQ simulations.

\begin{figure}[t]
\centering
{\includegraphics[width=0.45\linewidth]{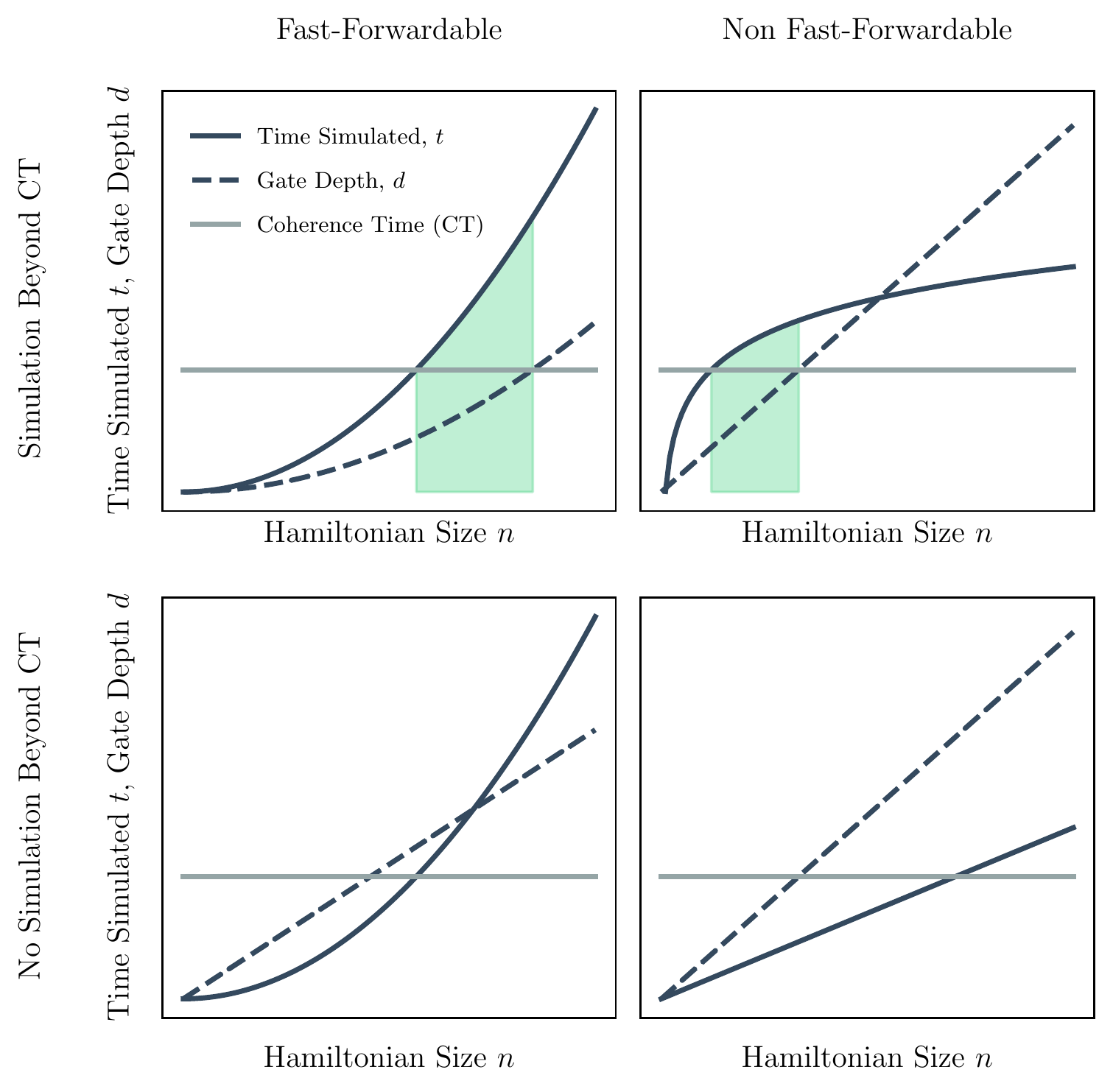}}
\caption{Schematic highlighting the independence of the ability to fast-forward a family of Hamiltonians and the ability to use VFF to simulate a subset of those Hamiltonians beyond the coherence time. The blue lines indicate the time $t$ that can be simulated (solid) and the corresponding gate depth $d$ required to do so (dashed), as a function of the number of simulated qubits $n$. The two figures in the left column indicate fast-forwardable Hamiltonians with $t(n)$ growing asymptotically faster than $d(n)$ (i.e. the solid blue line is above the dashed blue line for large $n$). Correspondingly, the two figures in the right column sketch $t(n)$ and $d(n)$ for non fast-forwardable Hamiltonians. The grey line indicates the coherence time of the quantum computer, that is the gate depth after which quantum simulations are no longer possible due to the build up of errors and decoherence. The green shading indicates the region in which simulation beyond the coherence time is possible. This is the scale on which the time simulated $t$ can be greater than the coherence time of quantum computer but the required gate depth $d$ to perform that simulation does not exceed the coherence time (i.e. the region in which the solid blue line is above, and the dashed blue line is below, the grey line).}
\label{fig:FastForwarding} 
\end{figure} 

From a theoretical standpoint, the bottleneck for VFF comes from the requirement that the unitaries $W$ and $D$ have a-priori a fixed depth. This will affect the implementation of VFF by limiting the theoretical minimum for the optimisation error $\epsilon_{ML}$.  From an implementation point of view, the limitations of VFF were addressed in the main text with the error analysis and termination condition.  The finite depth cut off will introduce an error $\delta = \mathrm{min}_{W,D} ||e^{-iH\Delta t} - WDW^{\dagger}|| $, where the minimisation occurs over all those unitaries $W$ and $D$ for which the total depth of $WDW^{\dagger}$ is at most the coherence time of the hardware device considered (i.e some fixed value). Naturally, $\delta$ will be a lower bound for the machine learning error from the optimisation step of the VFF algorithm $\delta\leq \epsilon_{ML}$. The fast forwardability of a family of Hamiltonians $H$ acting on an increasing number of qubits $n$ will not necessarily be enough to ensure that $\delta$ is small and scales as $\mathcal{O}(1/\mathrm{poly}(n))$ (i.e with a similar approximation error as in the definition of fast forwardability). The same holds for the possibly weaker condition of quantum diagonalisability. 

However, we could expect that the error in approximating the diagonalization with a fixed depth ansatz will exhibit better large $n$  scaling for quantum diagonalizable Hamiltonians than for those whose diagonalisation requires exponential resources. Thus the complexity-theoretic results may place fundamental limitations on the scalability of VFF for certain Hamiltonians as we reach the fault tolerant regime. 

That is not to say that Hamiltonians which are not quantum diagonalizable cannot have, for small $n$, a good fixed-depth ansatz approximation of their diagonalisation. We expect this to be true in particular of those Hamiltonians for which the exponential scaling of resources becomes dominant for much larger $n$. By contrast it will be possible that the fixed-depth ansatz approximation could not be suitable for certain Hamiltonians whose diagonalisation has a high overhead in constant factors, even if it has a polynomial asymptotic scaling.

Another issue that will affect the minimal theoretically achievable error $\delta$ for VFF is the expressivity of the fixed-depth ansatz. For a given depth, can the unitaries $W$ and $D$ that give the minimal $\delta$  be exactly represented by some parametrisation for the hardware-efficient ansatz (or more sophisticated ones such as UCC)? Generally, this will not necessarily be the case and this will be reflected in a higher value for $\delta$ with the restriction that one must minimise over particular fixed depth structures.

}

\end{document}